\newtheorem{fact}{Fact}
\newcommand{\set}[1]{\left\{ #1 \right\}}
\newcommand{\tuple}[1]{\left\langle #1 \right\rangle}
\renewcommand{\vec}[1]{\mathbf #1}
\newcommand{\len}[1]{{|{#1}|}}
\newcommand{\card}[1]{{|\!|{#1}|\!|}}
\newcommand{\arrow}[2]{\xrightarrow{{\scriptstyle #1}}_{{\scriptstyle #2}}}
\newcommand{\nat}{{\bf \mathbb{N}}}
\newcommand{\zed}{{\bf \mathbb{Z}}}
\renewcommand{\paragraph}[1]{\noindent{\bf #1}}
\newif\ifLongVersion\LongVersiontrue
\newcommand{\ssorts}[1]{#1^\mathrm{s}}
\newcommand{\sfuns}[1]{#1^\mathrm{f}}
\newcommand{\mods}{\mathbf{I}}
\newcommand{\teq}{\approx}
\newcommand{\I}{\mathcal{I}}
\newcommand{\locs}{\mathsf{Loc}}
\newcommand{\data}{\mathsf{Data}}
\newcommand{\nil}{\mathsf{nil}}
\newcommand{\emp}{\mathsf{emp}}
\newcommand{\wand}{
 \mathrel{\mbox{$\hspace*{-0.03em}\mathord{-}\hspace*{-0.66em}
 \mathord{-}\hspace*{-0.36em}\mathord{*}$\hspace*{-0.005em}}}} 
\newcommand{\seplog}{\mathsf{SL}}
\newcommand{\tinyseplog}{\mathsf{\scriptscriptstyle{SL}}}
\newcommand{\tterm}{\mathsf{t}}
\newcommand{\uterm}{\mathsf{u}}
\newcommand{\vterm}{\mathsf{v}}
\newcommand{\wterm}{\mathsf{w}}
\newcommand{\xterm}{\mathsf{x}}
\newcommand{\yterm}{\mathsf{y}}
\newcommand{\zterm}{\mathsf{z}}
\newcommand{\nextp}{\mathsf{next}}
\newcommand{\datap}{\mathsf{data}}
\newcommand{\fv}[1]{\mathrm{Fvc}(#1)}
\newcommand{\dom}{\mathrm{dom}}
\newcommand{\vars}{\mathsf{Vars}}
\newcommand{\funcsolve}{\small\mathsf{solve}}
\newcommand{\funcsmtsolve}{\small\mathsf{solve\_rec}}
\newcommand{\Int}{\mathsf{Int}}
\newcommand{\euf}{\mathsf{E}}
\newcommand{\lia}{\mathsf{LIA}}
\newcommand{\euflia}{\mathsf{ELIA}}
\begin{document}

\title{Reasoning in the Bernays-Sch\"onfinkel-Ramsey Fragment of
  Separation Logic}

\author{Andrew Reynolds\inst{1} \and Radu Iosif\inst{2} \and Cristina Serban\inst{2}} 

\institute{The University of Iowa \and Verimag/CNRS/Universit\'e de Grenoble Alpes}

\maketitle

\begin{abstract}
  Separation Logic ($\seplog$) is a well-known assertion language used
  in Hoare-style modular proof systems for programs with dynamically
  allocated data structures. In this paper we investigate the fragment
  of first-order $\seplog$ restricted to the
  Bernays-Sch\"onfinkel-Ramsey quantifier prefix $\exists^*\forall^*$,
  where the quantified variables range over the set of memory
  locations. When this set is uninterpreted (has no associated theory)
  the fragment is \textsc{PSPACE}-complete, which matches the
  complexity of the quantifier-free fragment
  \cite{CalcagnoYangOHearn01}. However, $\seplog$ becomes undecidable
  when the quantifier prefix belongs to $\exists^*\forall^*\exists^*$
  instead, or when the memory locations are interpreted as integers
  with linear arithmetic constraints, thus setting a sharp boundary
  for decidability within $\seplog$. We have implemented a decision
  procedure for the decidable fragment of $\exists^*\forall^*\seplog$
  as a specialized solver inside a DPLL($T$) architecture, within the
  CVC4 SMT solver. The evaluation of our implementation was carried
  out using two sets of verification conditions, produced
  by\begin{inparaenum}[(i)] \item unfolding inductive predicates,
  and \item a weakest precondition-based verification condition
  generator. \end{inparaenum} Experimental data shows that automated
  quantifier instantiation has little overhead, compared to manual
  model-based instantiation.
\end{abstract}

\section{Introduction}\label{sec:intro}

Separation Logic ($\seplog$) is a popular logical framework for
program verification, used by a large number of methods, ranging from
static analysis \cite{Predator,Xisa,Infer} to Hoare-style proofs
\cite{Sleek} and property-guided abstraction refinement
\cite{SplInter}. The salient features that make $\seplog$ particularly
attractive for program verification are the ability of
defining\begin{inparaenum}[(i)] \item recursive data structures using
  small and natural inductive definitions, \item weakest pre- and
  post-condition calculi that capture the semantics of programs with
  pointers, and \item compositional verification methods, based on the
  principle of local reasoning (analyzing separately pieces of program
  working on disjoint heaps). \end{inparaenum}

Consider, for instance, the following inductive definitions,
describing an acyclic and a possibly cyclic list segment,
respectively:
\[\begin{array}{ll}
\widehat{\mathsf{ls}}(\xterm,\yterm) \equiv \emp \wedge \xterm=\yterm ~\vee~ 
\xterm \neq \yterm \wedge \exists \zterm ~.~ \xterm \mapsto \zterm * \widehat{\mathsf{ls}}(\zterm,\yterm) & 
\text{ acyclic list segment from $\xterm$ to $\yterm$}
\\
\mathsf{ls}(\xterm,\yterm) \equiv \emp \wedge \xterm=\yterm ~\vee~ 
\exists \uterm ~.~ \xterm \mapsto \uterm * \mathsf{ls}(\uterm,\yterm) 
& \text{ list segment from $\xterm$ to $\yterm$}
\end{array}\]
Intuitively, an acyclic list segment is either empty, in which case
the head and the tail coincide ($\emp \wedge \xterm=\yterm$), or it
contains at least one element which is disjoint from the rest of the
list segment. We denote by $\xterm \mapsto \zterm$ the fact that
$\xterm$ is an allocated memory location, which points to $\zterm$,
and by $\xterm \mapsto \zterm * \widehat{\mathsf{ls}}(\zterm,\yterm)$
the fact that $\xterm \mapsto \zterm$ and
$\widehat{\mathsf{ls}}(\zterm,\yterm)$ hold over disjoint parts of the
heap. The constraint $\xterm\neq\yterm$, in the inductive definition
of $\widehat{\mathsf{ls}}$, captures the fact that the tail of the
list segment is distinct from every allocated cell in the list
segment, which ensures the acyclicity condition. Since this constraint
is omitted from the definition of the second (possibly cyclic) list
segment $\mathsf{ls}(\xterm,\yterm)$, its tail $\yterm$ is allowed to
point inside the set of allocated cells.

Automated reasoning is the key enabler of push-button program
verification. Any procedure that checks the validity of a logical
entailment between inductive predicates requires checking the
satisfiability of formulae from the base (non-inductive) assertion
language, as shown by the example below. Consider a fragment of the
inductive proof showing that any acyclic list segment is also a list
segment, given below:
\[\infer[]{\widehat{\mathsf{ls}}(\xterm,\yterm) \vdash \mathsf{ls}(\xterm,\yterm)}{
  \infer[\begin{array}{l}\xterm \neq \yterm \wedge \xterm \mapsto
      \zterm \models \exists \uterm ~.~ \xterm \mapsto \uterm
      \\ \text{by instantiation } \uterm \leftarrow
      \zterm \end{array}]{ \xterm \neq \yterm \wedge \xterm \mapsto
    \zterm * \widehat{\mathsf{ls}}(\zterm,\yterm) \vdash \exists
    \uterm ~.~ \xterm \mapsto \uterm * \mathsf{ls}(\uterm,\yterm)
  }{\widehat{\mathsf{ls}}(\zterm,\yterm) \vdash
    \mathsf{ls}(\zterm,\yterm)}}\] The first (bottom) inference in
the proof corresponds to one of the two cases produced by unfolding
both the antecedent and consequent of the entailment (the second case
$\emp \wedge \xterm=\yterm \vdash \emp \wedge \xterm=\yterm$ is
trivial and omitted for clarity). The second inference is a
simplification of the sequent obtained by unfolding, to a
sequent matching the initial one (by renaming $\zterm$ to $\xterm$),
and allows to conclude this branch of the proof by an inductive
argument, based on the principle of infinite descent
\cite{BrotherstonSimpson11}.

The simplification applied by the second inference above relies on
the validity of the entailment $\xterm \neq \yterm \wedge \xterm
\mapsto \zterm \models \exists \uterm ~.~ \xterm \mapsto \uterm$,
which reduces to the (un)satisfiability of the formula $\xterm \neq
\yterm \wedge \xterm \mapsto \zterm \wedge \forall \uterm ~.~ \neg
\xterm \mapsto \uterm$. The latter falls into the
Bernays-Sch\"onfinkel-Ramsey fragment, defined by the
$\exists^*\forall^*$ quantifier prefix, and can be proved
unsatisfiable using the instantiation of the universally quantified
variable $\uterm$ with the existentially quantified variable $\zterm$
(or a corresponding Skolem constant). In other words, this formula is
unsatisfiable because the universal quantified subformula asks that no
memory location is pointed to by $\xterm$, which is contradicted by
$\xterm \mapsto \zterm$. The instantiation of $\uterm$ that violates
the universal condition is $\uterm\leftarrow\zterm$, which is carried
over in the rest of the proof.

The goal of this paper is mechanizing satisfiability of the
Bernays-Sch\"onfinkel-Ramsey fragment of $\seplog$, without
inductively defined predicates\footnote{Strictly speaking, the
  Bernays-Sch\"onfinkel-Ramsey class refers to the
  $\exists^*\forall^*$ fragment of first-order logic with equality and
  predicate symbols, but no function symbols \cite{Lewis80}.}. This
fragment is defined by the quantifier prefix of the formulae in prenex
normal form. We consider formulae $\exists x_1 \ldots \exists x_m
\forall y_1 \ldots \forall y_n ~.~
\phi(x_1,\ldots,x_m,y_1,\ldots,y_n)$, where $\phi$ is any
quantifier-free formula of $\seplog$, consisting of pure formulae from
given base theory $T$, and points-to atomic propositions relating
terms of $T$, combined with unrestricted Boolean and separation
connectives, and the quantified variables range essentially over the
set of memory locations. In a nutshell, the contributions of the paper
are two-fold:
\begin{compactenum}
\item We draw a sharp boundary between decidability and
  undecidability, proving essentially that the satisfiability problem
  for the Bernays-Sch\"onfinkel-Ramsey fragment of $\seplog$ is
  \textsc{PSPACE}-complete, if the domain of memory locations is an
  uninterpreted set, whereas interpreting memory locations as integers
  with linear arithmetic constraints, leads to
  undecidability. Moreover, undecidability occurs even for
  uninterpreted memory locations, if we extend the quantifier prefix
  to $\exists^*\forall^*\exists^*$.
\item We have implemented an effective decision procedure for
  quantifier instantiation, based on counterexample-driven learning of
  conflict lemmas, integrated within the DPLL($T$) architecture
  \cite{GanzingerHagenNieuwenhuisOliverasTinelli04} of the CVC4 SMT
  solver \cite{CVC4-CAV-11}. Experimental evaluation of our
  implementation shows that the overhead of the push-button quantifier
  instantiation is negligible, compared to the time required to solve
  a quantifier-free instance of the problem, obtained manually, by
  model inspection.
\end{compactenum} 

\vspace*{\baselineskip}
\paragraph{\bf Related Work}
The first theoretical results on the decidability and computational
complexity of $\seplog$ (without inductive definitions) were found by
Calcagno, Yang and O'Hearn \cite{CalcagnoYangOHearn01}. They showed that
the satisfiability problem for $\seplog$ is undecidable, in the
presence of quantifiers, assuming that each memory location can point
to two other locations, i.e.\ using atomic propositions of the form
$\xterm \mapsto (\yterm,\zterm)$. Decidability can be recovered by
considering the quantifier-free fragment, proved to be
\textsc{PSPACE}-complete, by a small model argument
\cite{CalcagnoYangOHearn01}. Refinements of these results consider
decidable fragments of $\seplog$ with one record field (atomic
points-to propositions $\xterm \mapsto \yterm$), and one or two
quantified variables. In a nutshell, $\seplog$ with one record field
and separating conjunction only is decidable with non-elementary time
complexity, whereas adding the magic wand adjoint leads to
undecidability \cite{BrocheninDemriLozes11}. Decidability, in the
presence of the magic wand operator, is recovered by restricting the
number of quantifiers to one, in which case the logic becomes
\textsc{PSPACE}-complete \cite{DemriGalmicheWendlingMery14}. This
bound is sharp, because allowing two quantified variables leads to
undecidability, and decidability with non-elementary time complexity
if the magic wand is removed \cite{DemriDeters15}.

SMT techniques were applied to deciding the satisfiability of
$\seplog$ in the work of Piskac, Wies and Zufferey
\cite{Piskac2013,Piskac2014}. They considered quantifier-free
fragments of $\seplog$ with separating conjunction in positive form
(not occurring under negation) and without magic wand, and allowed for
hardcoded inductive predicates (list and tree segments). In a similar
spirit, we previously defined a translation to multi-sorted
second-order logic combined with counterexample-driven instantiation
for set quantifiers to define a decision procedure for the
quantifier-free fragment of $\seplog$ \cite{ReynoldsIosifKingSerban16}. In
a different vein, a tableau-based semi-decision procedure is given by
M\'ery and Galmiche \cite{MeryGalmiche07}. Termination of this
procedure is guaranteed for the (decidable) quantifier-free fragment
of $\seplog$, yet no implementation is available for comparison.

A number of automated theorem provers have efficient and complete approaches for 
the Bernays-Sch\"onfinkel-Ramsey fragment of first-order-logic,
also known as effectively propositional logic (EPR)~\cite{DBLP:journals/ijait/BaumgartnerFT06,DBLP:conf/cade/Korovin08}.
A dedicated approach for EPR in the SMT solver Z3 was developed in~\cite{DBLP:journals/jar/PiskacMB10}.
An approach based on finite model finding is implemented in CVC4~\cite{ReyEtAl-1-RR-13},
which is model-complete for EPR.
Our approach is based on counterexample-guided quantifier instantiation,
which has been used in the context of SMT solving in previous works~\cite{GeDeM-CAV-09,ReynoldsDKBT15Cav}.

\section{Preliminaries}

We consider formulae in multi-sorted first-order logic.
A \emph{signature} $\Sigma$ consists of a set $\ssorts{\Sigma}$ of
sort symbols and a set $\sfuns{\Sigma}$ of (sorted) \emph{function
  symbols} $f^{\sigma_1 \cdots \sigma_n \sigma}$, where $n \geq 0$ and
$\sigma_1, \ldots, \sigma_n, \sigma \in \ssorts{\Sigma}$. If $n=0$, we
call $f^\sigma$ a \emph{constant symbol}.  In this paper, we consider
signatures $\Sigma$ containing the Boolean sort, and write $\top$ and
$\bot$ for the Boolean constants \emph{true} and \emph{false}. For
this reason, we do not consider predicate symbols as part of a
signature, as predicates are viewed as Boolean functions.
Additionally, we assume for any finite sequence of sorts $\sigma_1,
\ldots, \sigma_n \in \ssorts{\Sigma}$, the \emph{tuple} sort $\sigma_1
\times \ldots \times \sigma_n$ also belongs to $\ssorts{\Sigma}$,
and that $\sfuns{\Sigma}$ includes the 
$i^{th}$ tuple projection function for each $i = 1, \ldots, n$.
For each $k > 0$, let $\sigma^k$ denote the $k$-tuple sort $\sigma \times
\ldots \times \sigma$.

Let $\vars$ be a countable set of first-order variables, each $x^\sigma \in
\vars$ having an associated sort $\sigma$. First-order terms and formulae
over the signature $\Sigma$ (called $\Sigma$-terms and
$\Sigma$-formulae) are defined as usual. For a $\Sigma$-formula
$\varphi$, we denote by $\fv{\varphi}$ the set of free variables and
constant symbols in $\varphi$, and by writing $\varphi(x)$ we mean
that $x \in \fv{\phi}$. Whenever $\fv{\phi} \cap \vars = \emptyset$,
we say that $\phi$ is a \emph{sentence}, i.e.\ $\phi$ has no free
variables.
A \emph{$\Sigma$-interpretation $\I$} maps:\begin{inparaenum}[(1)]
\item each sort symbol $\sigma \in \Sigma$ to a non-empty set $\sigma^\I$,
\item each function symbol $f^{\sigma_1,\ldots,\sigma_n,\sigma} \in \Sigma$ to a
  total function $f^\I : \sigma^\I_1 \times \ldots \times \sigma^\I_n
  \rightarrow \sigma^\I$ where $n > 0$, and to an element of $\sigma^\I$ when $n
  = 0$, and
\item each variable $x^\sigma \in \vars$ to an element of $\sigma^\I$.
\end{inparaenum}
For an interpretation $\I$ a sort symbol $\sigma$ and a variable $x$,
we denote by $\I[\sigma \leftarrow S]$ and, respectively $\I[x
  \leftarrow v]$, the interpretation associating the set $S$ to
$\sigma$, respectively the value $v$ to $x$, and which behaves like
$\I$ in all other cases\footnote{By writing $\I[\sigma \leftarrow S]$
  we ensure that all variables of sort $\sigma$ are mapped by $\I$ to
  elements of $S$.}. For a $\Sigma$-term $t$, we write $t^\I$ to
denote the interpretation of $t$ in $\I$, defined inductively, as
usual. A satisfiability relation between $\Sigma$-interpretations and
$\Sigma$-formulas, written $\I \models \varphi$, is also defined
inductively, as usual. We say that $\I$ is \emph{a model of $\varphi$}
if $\I$ satisfies $\varphi$.

A (multi-sorted first-order) \emph{theory} is a pair \(T = (\Sigma,
\mods)\) where $\Sigma$ is a signature and $\mods$ is a non-empty set
of $\Sigma$-interpretations, the \emph{models} of $T$. We assume that
$\Sigma$ always contains the equality predicate, which we denote by
$\teq$, as well as projection functions for each tuple sort.  A
$\Sigma$-formula $\varphi$ is \emph{$T$-satisfiable} if it is
satisfied by some interpretation in $\mods$. We write $\euf$ to denote
the empty theory (with equality), whose signature consists of a sort
$U$ with no additional function symbols, and $\lia$ to denote the
theory of linear integer arithmetic, whose signature consists of the
sort $\Int$, the binary predicate symbol $\geq$, function $+$ denoting
addition, and the constants $0,1$ of sort $\Int$, interpreted as
usual. In particular, there are no uninterpreted function symbols in
$\lia$. By $\euflia$ we denote the theory obtained by extending the
signature of $\lia$ with the sort $U$ of $\euf$ and the equality over
$U$.

Let \(T = (\Sigma, \mods)\) be a theory and let $\locs$ and $\data$ be
two sorts from $\Sigma$, with no restriction other than the fact that
$\locs$ is always interpreted as a countable set. Also, we consider
that $\Sigma$ has a designated constant symbol $\nil^\locs$.  The
\emph{Separation Logic} fragment $\seplog(T)_{\locs,\data}$ is the set of
formulae generated by the following syntax:
\[\begin{array}{lcl}
\varphi & := & \phi \mid \emp \mid \tterm \mapsto \uterm \mid
\varphi_1 * \varphi_2 \mid \varphi_1 \wand \varphi_2 \mid \neg
\varphi_1 \mid \varphi_1 \wedge \varphi_2 \mid \exists x^\sigma ~.~
\varphi_1(x)
\end{array}\]
where $\phi$ is a $\Sigma$-formula, and $\tterm$, $\uterm$ are
$\Sigma$-terms of sorts $\locs$ and $\data$, respectively. As usual,
we write $\forall x^\sigma ~.~ \varphi(x)$ for $\neg\exists x^\sigma ~.~
\neg\varphi(x)$. We omit specifying the sorts of variables and
constants when they are clear from the context.

Given an interpretation $\I$, a \emph{heap} is a finite partial
mapping $h : \locs^\I \rightharpoonup_{\mathrm{fin}} \data^\I$. For a
heap $h$, we denote by $\dom(h)$ its domain. For two heaps $h_1$ and
$h_2$, we write $h_1 \# h_2$ for $\dom(h_1) \cap \dom(h_2) =
\emptyset$ and $h = h_1 \uplus h_2$ for $h_1 \# h_2$ and $h = h_1 \cup
h_2$. We define the \emph{satisfaction relation} $\I,h
\models_{\tinyseplog} \phi$ inductively, as follows:
\[\begin{array}{lcl}
\I,h \models_{\tinyseplog} \phi & \iff & \I \models \phi \text{ if $\phi$ is a $\Sigma$-formula} \\
\I,h \models_{\tinyseplog} \emp & \iff & h = \emptyset \\
\I,h \models_{\tinyseplog} \tterm \mapsto \uterm & \iff & 
h = \{(\tterm^\I,\uterm^\I)\} \text{ and } \tterm^\I\not\teq\nil^\I \\
\I,h \models_{\tinyseplog} \phi_1 * \phi_2 & \iff & \text{there exist heaps } h_1,h_2 
\text{ s.t. } h=h_1\uplus h_2 
\text{ and } \I,h_i \models_{\tinyseplog} \phi_i, i = 1,2 \\
\I,h \models_{\tinyseplog} \phi_1 \wand \phi_2 & \iff & \text{for all heaps } h' \text{ if } h'\#h 
\text{ and } \I,h' \models_{\tinyseplog} \phi_1
\text{ then } \I,h'\uplus h \models_{\tinyseplog} \phi_2 \\
\I,h \models_{\tinyseplog} \exists x^S . \varphi(x) & \iff & 
\I[x \leftarrow s],h \models_{\tinyseplog} \varphi(x) \text{, for some }s \in S^\I
\end{array}\]
The satisfaction relation for $\Sigma$-formulae, Boolean connectives
$\wedge$, $\neg$, and linear arithmetic atoms, are the classical ones
from first-order logic. Notice that the range of a quantified variable
$x^S$ is the interpretation of its associated sort $S^\I$.

A formula $\varphi$ is said to be \emph{satisfiable} if there exists
an interpretation $\I$ and a heap $h$ such that $\I,h
\models_{\tinyseplog} \varphi$. The $(\seplog,T)$-\emph{satisfiability
  problem} asks, given an $\seplog$ formula $\varphi$, whether there
exists an interpretation $\I$ of $T$ and a heap $h$ such that $\I,h
\models_{\tinyseplog} \varphi$. We write $\varphi
\models_{\tinyseplog} \psi$ if for every interpretation $\I$ and heap
$h$, if $\I,h \models_{\tinyseplog} \varphi$ then $\I,h
\models_{\tinyseplog} \psi$, and we say that $\varphi$ \emph{entails}
$\psi$ in this case. 

\vspace*{\baselineskip}
\paragraph{\bf The Bernays-Sch\"onfinkel-Ramsey Fragment of $\seplog$}
In this paper we address the satisfiability problem for the class of
sentences $\phi \equiv \exists x_1 \ldots \exists x_m \forall y_1
\ldots \forall y_n ~.~ \varphi(x_1, \ldots, x_m, y_1, \\ \ldots,
y_n)$, where $\varphi$ is a quantifier-free formula of
$\seplog(T)_{\locs,\data}$. We shall denote this fragment by
$\exists^*\forall^*\seplog(T)_{\locs,\data}$. It is easy to see that
any sentence $\phi$, as above, is satisfiable if and only if the
sentence $\forall y_1 \ldots \forall y_n ~.~ \varphi[c_1/x_1, \ldots,
  c_m/x_m]$ is satisfiable, where $c_1,\ldots,c_m$ are fresh (Skolem)
constant symbols. The latter is called the \emph{functional form} of
$\phi$.

As previously mentioned, $\seplog$ is used mainly specify properties
of a program's heap. If the program under consideration uses pointer
arithmetic, as in C or C++, it is useful to consider $\lia$ for the
theory of memory addresses. Otherwise, if the program only compares
the values of the pointers for equality, as in Java, one can use
$\euf$ for this purpose. This distinction led us to considering the
satisfiability problem for
$\exists^*\forall^*\seplog(T)_{\locs,\data}$ in the following cases:
\begin{compactenum}
\item $\locs$ is interpreted as the sort $U$ of $\euf$ and $\data$ as
  $U^k$, for some $k\geq1$. The satisfiability problem for the
  fragment $\exists^*\forall^*\seplog(\euf)_{U,U^k}$ is
  \textsc{PSPACE}-complete, and the proof follows a small model
  property argument.
\item as above, with the further constraint that $U$ is interpreted as
  an infinite countable set, i.e.\ of cardinality $\aleph_0$. In this
  case, we prove a cut-off property stating that all locations not in
  the domain of the heap and not used in the interpretation of
  constants, are equivalent from the point of view of an $\seplog$
  formula. This satisfiability problem is reduced to the unconstrained one
  above, and also found to be \textsc{PSPACE}-complete. 
\item both $\locs$ and $\data$ are interpreted as $\Int$, equipped
  with addition and total order, in which case
  $\exists^*\forall^*\seplog(\lia)_{\Int,\Int}$ is undecidable.
\item $\locs$ is interpreted as the sort $U$ of $\euf$, and $\data$ as $U
  \times \Int$. Then $\exists^*\forall^*\seplog(\euflia)_{U,U \times
    \Int}$ is undecidable.
\end{compactenum}
Additionally, we prove that the fragment
$\exists^*\forall^*\exists^*\seplog(\euf)_{U,U^k}$, with two
quantifier alternations, is undecidable, if $k\geq2$. The question
whether the fragment $\exists^*\forall^*\seplog(\euflia)_{U,\Int}$ is
decidable is currently open, and considered for future work.  


\section{Decidability and Complexity Results}

This section defines the decidable cases of the
Bernays-Sch\"onfinkel-Ramsey fragment of $\seplog$, with matching
undecidable extensions. The decidable fragment
$\exists^*\forall^*\seplog(\euf)_{U,U^k}$ relies on a small model
property given in Section \ref{sec:small-model}. Undecidability of
$\exists^*\forall^*\seplog(\lia)_{\Int,\Int}$ is obtained by a
refinement of the undecidability proof for Presburger arithmetic with
one monadic predicate \cite{Halpern91}, in Section
\ref{sec:undecidability}.

\subsection{Small Model Property}\label{sec:small-model}

The decidability proof for the quantifier-free fragment of $\seplog$
\cite{CalcagnoYangOHearn01,YangPhd} relies on a small model
property. Intuitively, no quantifier-free $\seplog$ formula can
distinguish between heaps in which the number of invisible locations,
not in the range of the set of free variables, exceeds a certain
threshold, linear in the size of the formula. Then a formula is
satisfiable iff it has a heap model of size linear in the size of the
input formula.

For reasons of self-containment, we recall a number of definitions and
results from \cite{YangPhd}. Some of them are slightly modified for
our purposes, but these changes have no effect on the validity of the
original proofs for the Lemmas \ref{lemma:small-model-existence} and
\ref{lemma:heap-equivalence} below. In the rest of this section, we
consider formulae of $\seplog(\euf)_{U,U^k}$, meaning
that\begin{inparaenum}[(i)]
\item $\locs = U$, and \item there exists an integer $k>0$ such that
  $\data = U^k$, where $U$ is the (uninterpreted) sort of
  $\euf$. \end{inparaenum} We fix $k$ for the rest of this section. 

\begin{definition}\cite[Definition 90]{YangPhd}
Given a set of locations $S$, the equivalence relation $=_S$ between
$k$-tuples of locations is defined as $\langle v_1, \ldots, v_k
\rangle =_S \langle v'_1, \ldots, v'_k\rangle$ if and only
if \begin{compactitem}
\item if $v_i \in S$ then $v_i=v'_i$, and
\item if $v_i \not\in S$ then $v'_i \not\in S$, 
\end{compactitem}
for all $i = 1, \ldots, k$. 
\end{definition}
Intuitively, $=_S$ restricts the equality to the elements in
$S$. Observe that $=_S$ is an equivalence relation and that $S
\subseteq T$ implies $=_T ~\subseteq~ =_S$. For a set $S$, we write
$\card{S}$ for its cardinality, in the following.

\begin{definition}\cite[Definition 91]{YangPhd}\label{def:heap-equivalence}
Given an interpretation $\I$, an integer $n>0$, a set of variables $X
\subseteq \vars$ and a set of locations $S \subseteq U^\I$, for any two
heaps $h,h' : U^\I \rightharpoonup_{\mathrm{fin}} (U^\I)^k$, we define $h
\sim^\I_{n,X,S} h'$ if and only if \begin{compactenum}
\item $\I(X) \cap \dom(h) = \I(X) \cap \dom(h')$, 
\item for all $\ell \in \I(X) \cap \dom(h)$, we have $h(\ell) =_{\I(X)\cup S} h'(\ell)$, 
\item if $\card{\dom(h) \setminus \I(X)} < n$ then 
  $\card{\dom(h) \setminus \I(X)} = \card{\dom(h') \setminus \I(X)}$, 
\item if $\card{\dom(h) \setminus \I(X)} \geq n$ then 
  $\card{\dom(h') \setminus \I(X)} \geq n$. 
\end{compactenum}
\end{definition}
Observe that, for any $n \leq m$ and $S \subseteq T$ we have
$\sim^\I_{m,X,T} ~\subseteq~ \sim^\I_{n,X,S}$. In addition, for any
integer $k>0$, subset $S \subseteq U^\I$ and location $\ell \in U^\I$,
we consider the function $\mathit{prun}_{k,S}^\ell(\ell_1, \ldots,
\ell_k)$, which replaces each value $\ell_i \not\in S$ in its argument
list by $\ell$.

\begin{lemma}\cite[Lemma 94]{YangPhd}\label{lemma:small-model-existence}
Given an interpretation $\I$ and a heap $h : U^\I
\rightharpoonup_{\mathrm{fin}} (U^\I)^k$, for each integer $n>0$, each
set of variables $X \subseteq \vars$, each set of locations $L
\subseteq U^\I$ such that $L \cap \I(X) = \emptyset$ and $\card{L}=n$,
and each location $v \in U^\I \setminus (\I(X) \cup \{\nil^\I\} \cup
L)$, there exists a heap $h' : U^\I \rightharpoonup_{\mathrm{fin}}
(U^\I)^k$, with the following properties: \begin{compactenum}
\item $h \sim^\I_{n,X,L} h'$, 
\item $\dom(h') \setminus \I(X) \subseteq L$, 
\item for all $\ell \in \dom(h')$, we have $h'(\ell) =
  \mathit{prun}_{k,\I(X)\cup L}^v(h(\ell))$. 
\end{compactenum}
\end{lemma}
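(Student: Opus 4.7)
The plan is to construct $h'$ explicitly by restricting $h$ to a small ``shadow'' of its invisible domain inside $L$, and pruning all values through $\mathit{prun}^v_{k,\I(X) \cup L}$.

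First, I would split $\dom(h)$ into the visible part $V = \I(X) \cap \dom(h)$ and the invisible part $M = \dom(h) \setminus \I(X)$. Since $L$ is disjoint from $\I(X)$ and $\card{L} = n$, I can choose a subset $M' \subseteq M$ together with an injection $\iota : M' \to L$: if $\card{M} < n$, I take $M' = M$ and any injection into $L$ (available because $\card{M} < n = \card{L}$); if $\card{M} \geq n$, I pick $M' \subseteq M$ with $\card{M'} = n$ and let $\iota$ be a bijection onto $L$. Then I define $h'$ with domain $V \cup \iota(M')$ by
\[
h'(\ell) = \mathit{prun}^v_{k,\I(X) \cup L}(h(\ell)) \text{ for } \ell \in V, \qquad h'(\iota(\ell)) = \mathit{prun}^v_{k,\I(X) \cup L}(h(\ell)) \text{ for } \ell \in M'.
\]
This construction directly yields point (2), since $\dom(h') \setminus \I(X) = \iota(M') \subseteq L$, and point (3) up to the bookkeeping bijection $\iota$ that identifies the new invisible locations of $h'$ in $L$ with the invisible source locations of $h$ in $M'$.

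Next I would verify the four clauses of $h \sim^\I_{n,X,L} h'$. Clause (1) is immediate from $\I(X) \cap \dom(h') = V = \I(X) \cap \dom(h)$, as $\iota(M') \subseteq L$ is disjoint from $\I(X)$. Clauses (3) and (4) follow from the case split in the choice of $M'$: in the first case $\card{\dom(h) \setminus \I(X)} = \card{M} = \card{M'} = \card{\dom(h') \setminus \I(X)}$; in the second case $\card{\dom(h') \setminus \I(X)} = n$. For clause (2), I would argue componentwise on tuples: for each $\ell \in V$ and each position $i$, either $h(\ell)_i \in \I(X) \cup L$, in which case the pruning preserves the value and $h'(\ell)_i = h(\ell)_i$, or $h(\ell)_i \notin \I(X) \cup L$, in which case $h'(\ell)_i = v$, and by the choice of $v$ this lies outside $\I(X) \cup L$; in both subcases the definition of $=_{\I(X) \cup L}$ is satisfied.

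The construction is essentially a bookkeeping argument, so I do not expect a deep obstacle. The one subtle point is that clause (3) of the lemma as stated speaks of $h(\ell)$ for $\ell \in \dom(h')$, whereas the new locations $\iota(M') \subseteq L$ need not themselves belong to $\dom(h)$; I would read this clause modulo $\iota$ on the invisible part. The role of the placeholder $v \notin \I(X) \cup \{\nil^\I\} \cup L$ is to absorb every ``invisible data'' value consistently, which is precisely what keeps condition (2) of $\sim^\I_{n,X,L}$ intact, and the hypothesis $\card{L} = n$ is what lets the case $\card{M} \geq n$ relocate exactly $n$ invisible cells to witnesses inside $L$.
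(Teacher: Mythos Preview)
The paper does not actually prove this lemma: it is quoted from Yang's thesis \cite{YangPhd} with the remark that the statement has been ``slightly modified'' but that this does not affect the original proof. There is therefore no in-paper argument to compare your proposal against; the construction you give is the standard one and correctly establishes property~(2) together with all four clauses of $\sim^\I_{n,X,L}$.

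Your closing observation about property~(3) is not merely a ``subtle point'' but a genuine defect of the statement as printed. Read literally, (3) forces $\dom(h') \subseteq \dom(h)$ (otherwise $h(\ell)$ is undefined), and together with~(2) this yields $\dom(h')\setminus\I(X) \subseteq L \cap (\dom(h)\setminus\I(X))$. If $L$ is chosen disjoint from $\dom(h)$ while $0 < \card{\dom(h)\setminus\I(X)} < n$, then $\dom(h')\setminus\I(X)$ must be empty, contradicting clause~(3) of $\sim^\I_{n,X,L}$. So no $h'$ exists in that case, and the lemma is not true for arbitrary $L$ under the literal reading; your ``modulo $\iota$'' reading is the correct repair of the general statement.

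The resolution the paper implicitly relies on is visible in the only place the lemma is invoked (the proof of Lemma~\ref{lemma:small-model-property}): there one first assumes $\card{\dom(h)\setminus\I(X)} > n$ and then \emph{chooses} $L$ as a subset of $\dom(h)\setminus\I(X)$. Under that additional hypothesis your injection $\iota$ can be taken to be the identity on $M' = L$, and property~(3) holds verbatim. So your construction is right; what is missing is not an idea in your proof but the hypothesis $L \subseteq \dom(h)\setminus\I(X)$ (or equivalently, the bookkeeping bijection you introduced) in the lemma statement itself.
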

Next, we define the following measure on quantifier-free $\seplog$
formulae:
\[\begin{array}{ccccccc}
\len{\phi*\psi} = \len{\phi}+\len{\psi} & \hspace*{2mm} & 
\len{\phi \wand \psi} = \len{\psi} & \hspace*{2mm} & 
\len{\phi\wedge\psi} = \max(\len{\phi},\len{\psi}) & \hspace*{2mm} & 
\len{\neg\phi} = \len{\phi} \\
\len{\tterm \mapsto \uterm} = 1 & \hspace*{2mm} & 
\len{\emp} = 1 & \hspace*{2mm} & 
\len{\phi} = 0 \text{ if $\phi$ is a $\Sigma$-formula}
\end{array}\]
Intuitively, $\len{\varphi}$ is the maximum number of invisible
locations, that are not in $\I(\fv{\varphi})$, and which can be
distinguished by the quantifier-free $\seplog(\euf)_{U,U^k}$ formula
$\varphi$. The crux of the \textsc{PSPACE}-completeness proof for
quantifier-free $\seplog(\euf)_{U,U^k}$ is that two heaps equivalent
up to $\len{\varphi}$ invisible locations are also equivalent from the
point of view of satisfiability of $\varphi$, which provides a small
model property for this fragment \cite{YangPhd,CalcagnoYangOHearn01}.

\begin{lemma}\cite[Prop. 95]{YangPhd}\label{lemma:heap-equivalence}
Given a quantifier-free $\seplog(\euf)_{U,U^k}$ formula $\varphi$, an
interpretation $\I$, and two heaps $h$ and $h'$, if $h
\sim^\I_{\len{\varphi},\fv{\varphi},\emptyset} h'$ and $\I,h
\models_{\tinyseplog} \varphi$ then $\I,h' \models_{\tinyseplog}
\varphi$.
\end{lemma}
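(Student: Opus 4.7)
The plan is to prove the statement by structural induction on the quantifier-free formula $\varphi$. A small preliminary observation is that $\sim^\I_{n,X,S}$ is symmetric (immediate from Definition \ref{def:heap-equivalence}), so proving the implication in one direction suffices in each case of the induction; in particular the negation case reuses the IH on the converse direction for free.

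For the base cases: if $\varphi$ is a pure $\Sigma$-formula, satisfaction is independent of the heap, so the conclusion is immediate. For $\varphi = \emp$, we have $\len{\varphi}=1$, and $h=\emptyset$ yields $\card{\dom(h)\setminus\I(\fv{\varphi})}=0<1$, so clauses 1 and 3 of Definition \ref{def:heap-equivalence} force $\dom(h')=\emptyset$. For $\varphi=\tterm\mapsto\uterm$, the values $\tterm^\I$ and every component of $\uterm^\I$ already lie in $\I(\fv{\varphi})$; clause 2 therefore upgrades the relation $=_{\I(\fv{\varphi})}$ to genuine tuple equality, while clauses 1 and 3 with $n=1$ pin down $\dom(h')=\{\tterm^\I\}$.

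For the inductive cases, $\wedge$ follows from $\len{\phi_1\wedge\phi_2}=\max(\len{\phi_1},\len{\phi_2})$ together with the monotonicity $\sim^\I_{m,X,T}\subseteq\sim^\I_{n,X,S}$ whenever $n\le m$ and $S\subseteq T$; $\neg$ uses $\len{\neg\phi}=\len{\phi}$ and the symmetry noted above. For $\phi_1*\phi_2$ I would transport a witness decomposition $h=h_1\uplus h_2$ to a decomposition $h'=h'_1\uplus h'_2$ by first assigning each $\ell\in\dom(h')\cap\I(\fv{\varphi})$ to the same side as in $h$ (well-defined by clause 1 of $\sim$), and then distributing the remaining invisible locations of $h'$ so that each side $h'_i$ receives at least $\min(\len{\phi_i},\card{\dom(h_i)\setminus\I(\fv{\varphi})})$ of them; the equation $\len{\phi_1*\phi_2}=\len{\phi_1}+\len{\phi_2}$ makes this always feasible, and $h_i\sim^\I_{\len{\phi_i},\fv{\phi_i},\emptyset}h'_i$ then follows clause-by-clause from the original equivalence, so the IH concludes.

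The main obstacle is the magic wand case $\varphi=\phi_1\wand\phi_2$, where $\len{\varphi}=\len{\phi_2}$ allocates no direct budget for $\phi_1$. Assuming $\I,h\models_{\tinyseplog}\phi_1\wand\phi_2$, pick an arbitrary $h''\#h'$ with $\I,h''\models_{\tinyseplog}\phi_1$; the target is $\I,h'\uplus h''\models_{\tinyseplog}\phi_2$. The idea is to import Lemma \ref{lemma:small-model-existence} applied to $h''$ with $n=\len{\phi_1}$, $X=\fv{\phi_1}$, and a set $L$ of exactly $n$ fresh locations chosen disjoint from $\I(\fv{\phi_1\wand\phi_2})\cup\{\nil^\I\}\cup\dom(h)\cup\dom(h')$, obtaining a pruned heap $h'''$ with $h''\sim^\I_{\len{\phi_1},\fv{\phi_1},L}h'''$ and $\dom(h''')\setminus\I(\fv{\phi_1})\subseteq L$. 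Monotonicity of $\sim$ and the IH on $\phi_1$ then give $\I,h'''\models_{\tinyseplog}\phi_1$, and the fresh choice of $L$ guarantees $h'''\#h$; the wand hypothesis yields $\I,h\uplus h'''\models_{\tinyseplog}\phi_2$. The remaining, and most delicate, step is a bookkeeping verification that $h\uplus h'''\sim^\I_{\len{\phi_2},\fv{\phi_2},\emptyset}h'\uplus h''$, combining the given equivalence $h\sim^\I_{\len{\phi_2},\fv{\varphi},\emptyset}h'$ on the shared skeleton with the pruning bound on $\dom(h''')\setminus\I(\fv{\phi_1})$ to keep the invisible-location counts on both sides within budget; the IH on $\phi_2$ then delivers $\I,h'\uplus h''\models_{\tinyseplog}\phi_2$ and closes the induction.
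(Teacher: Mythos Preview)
The paper does not give its own proof of this lemma; it is quoted from Yang's thesis \cite[Prop.~95]{YangPhd} and used as a black box. Your structural-induction plan is the standard one (and essentially Yang's), and your treatment of the base cases, $\wedge$, $\neg$, and $*$ is sound.

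The wand case, however, has a concrete error and a genuine gap. First, invoking Lemma~\ref{lemma:small-model-existence} with $X=\fv{\phi_1}$ is wrong: any location $\ell\in\I(\fv{\phi_2})\setminus\I(\fv{\phi_1})$ that lies in $\dom(h'')$ gets pruned out of $h'''$, and then clause~1 of your claimed equivalence $h\uplus h'''\sim^\I_{\len{\phi_2},\fv{\phi_2},\emptyset}h'\uplus h''$ already fails. You must take $X=\fv{\phi_1\wand\phi_2}$; to make the induction hypothesis on $\phi_1$ applicable afterwards, either strengthen the statement to arbitrary $X\supseteq\fv{\varphi}$, or first check (it is true) that $\sim^\I_{n,X,S}\subseteq\sim^\I_{n,Y,S}$ whenever $Y\subseteq X$. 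Second, even after fixing $X$, the choice $n=\len{\phi_1}$ is too small: if $\card{\dom(h'')\setminus\I(\fv{\varphi})}>\len{\phi_1}$ while $\card{\dom(h)\setminus\I(\fv{\varphi})}<\len{\phi_2}-\len{\phi_1}$, then $h'''$ has strictly fewer invisible locations than $h''$ and clause~3 of the final equivalence fails (the two disjoint unions have different, yet both sub-threshold, invisible counts). The ``bookkeeping'' you defer is therefore not routine; you need to choose $n\geq\max(\len{\phi_1},\card{\dom(h'')\setminus\I(\fv{\varphi})})$ so that $h'''$ retains the same number of invisible locations as $h''$, and then argue that enough fresh locations exist for $L$. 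Only with these repairs does the clause-by-clause verification go through.
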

Our aim is to extend this result to
$\exists^*\forall^*\seplog(\euf)_{U,U^k}$, in the first place. This
new small model property is given by the next lemma. 

\begin{lemma}\label{lemma:small-model-property}
Let $\varphi(x_1^U,\ldots,x_n^U)$ be a quantifier-free
$\seplog(\euf)_{U,U^k}$-formula, and $\varphi^\forall \equiv \forall
x_1^U \ldots \\ \forall x_n^U ~.~ \varphi(x_1^U,\ldots,x_n^U)$ be its
universal closure. Then $\varphi^\forall$ has a model if and only if
there exists an interpretation $\I$ and a heap $h : U^\I
\rightharpoonup_{\mathrm{fin}} (U^\I)^k$ such that $\I,h
\models_{\tinyseplog} \varphi^\forall$ and: \begin{compactenum}
\item $\card{U^\I} \leq \len{\varphi} + \card{\fv{\varphi^\forall}} + n$, 
\item $\dom(h) \subseteq L \cup \I(\fv{\varphi^\forall})$, 
\item for all $\ell \in \dom(h)$, we have $h(\ell) \in
  (\I(\fv{\varphi^\forall}) \cup \{\nil^\I\} \cup L \cup \{v\})^k$,
\end{compactenum}
where $L \subseteq U^\I \setminus \I(\fv{\varphi^\forall})$ is a
set of locations such that $\card{L} = \len{\varphi}+n$ and $v \in
U^\I \setminus (\I(\fv{\varphi^\forall}) \cup \{\nil^\I\} \cup L)$
is an arbitrary location.
\end{lemma}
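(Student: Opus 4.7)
The easy direction ($\Leftarrow$) is immediate, since conditions 1--3 describe a valid model. For ($\Rightarrow$), assume $\I_0, h_0 \models_{\tinyseplog} \varphi^\forall$, and without loss of generality that $U^{\I_0}$ is countably infinite (one can always extend by elements indistinguishable from $\varphi^\forall$'s perspective). The plan is to recycle the quantifier-free machinery of Lemmas \ref{lemma:small-model-existence} and \ref{lemma:heap-equivalence}, but with the ``invisible locations'' budget inflated from $\len{\varphi}$ to $\len{\varphi}+n$, reserving the extra $n$ slots to accommodate arbitrary assignments of the $n$ universally quantified variables.

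The setup proceeds in three steps. First, choose $L \subseteq U^{\I_0} \setminus \I_0(\fv{\varphi^\forall})$ of cardinality $\len{\varphi}+n$, by a case split to ensure $L$ interacts well with $\dom(h_0)$: if $\card{\dom(h_0) \setminus \I_0(\fv{\varphi^\forall})} \geq \len{\varphi}+n$, take $L$ inside $\dom(h_0) \setminus \I_0(\fv{\varphi^\forall})$; otherwise, take $L$ containing $\dom(h_0) \setminus \I_0(\fv{\varphi^\forall})$, completed by elements outside $\dom(h_0) \cup \I_0(\fv{\varphi^\forall})$. Second, choose $v \in U^{\I_0} \setminus (\I_0(\fv{\varphi^\forall}) \cup \{\nil^{\I_0}\} \cup L \cup \dom(h_0) \cup R)$, where $R$ is the finite set of elements appearing as coordinates in the range of $h_0$. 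Third, apply Lemma \ref{lemma:small-model-existence} with parameters $(\len{\varphi}+n, \fv{\varphi^\forall}, L, v)$ to obtain a heap $h_*$ satisfying its three conclusions, and define $\I$ to be the restriction of $\I_0$ to $U^\I \defequal \I_0(\fv{\varphi^\forall}) \cup L \cup \{v\}$. Conditions 1--3 of the statement then follow by construction.

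It remains to show $\I, h_* \models_{\tinyseplog} \varphi^\forall$. Fix any $\vec{s} \in (U^\I)^n$, pick a single fresh $v^* \in U^{\I_0} \setminus (U^\I \cup \dom(h_0) \cup R)$, and define $\vec{s'}$ from $\vec{s}$ by substituting $v^*$ for every occurrence of $v$ (keeping equal entries equal). The hypothesis gives $\I_0[\vec{s'}], h_0 \models_{\tinyseplog} \varphi$. A direct check of Definition \ref{def:heap-equivalence} yields $h_0 \sim^{\I_0[\vec{s'}]}_{\len{\varphi}, \fv{\varphi}, \emptyset} h_*$: conditions 1 and 2 use the case-analysis choice of $L$ together with the freshness of $v$ and $v^*$ (which guarantees they never occur in $\dom(h_0)$ or in $R$), while conditions 3 and 4 use that $\card{L}=\len{\varphi}+n$ still leaves at least $\len{\varphi}$ invisible elements after removing the at-most-$n$ positions hit by $\vec{s'}$. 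Lemma \ref{lemma:heap-equivalence} then delivers $\I_0[\vec{s'}], h_* \models_{\tinyseplog} \varphi$, and transposing $v$ with $v^*$ -- both fresh with respect to $\I_0(\fv{\varphi^\forall}) \cup L$ and to the domain and range of $h_*$ -- is an automorphism of $(\I_0, h_*)$, so $\I_0[\vec{s}], h_* \models_{\tinyseplog} \varphi$ as well.

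The main obstacle is the final descent from $(\I_0, h_*)$ in the big universe $U^{\I_0}$ to $(\I, h_*)$ in the restricted universe $U^\I$. Satisfaction of $\phi$, $\emp$, $\mapsto$, $\wedge$, $\neg$, and $*$ is invariant under this restriction, since only elements reachable through the heap or through the free variables matter, and these already lie in $U^\I$. For the magic wand $\wand$, however, the range of admissible auxiliary heaps $h'$ depends on the universe, and descent is not automatic. The strategy here is the one used in Yang's proof of Lemma \ref{lemma:heap-equivalence}: the $\len{\varphi}+n$ elements of $L$, together with $v$, provide enough slack so that any counterexample $h'$ drawn from $U^{\I_0}$ can be normalized, via the same $\mathit{prun}$ construction, to an equivalent $h''$ supported entirely on $U^\I$, and conversely. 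Formalizing this invariance by structural induction on $\varphi$ is the technical crux.
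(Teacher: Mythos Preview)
Your overall strategy matches the paper's: enlarge the invisible-location budget from $\len{\varphi}$ to $\len{\varphi}+n$, apply Lemma~\ref{lemma:small-model-existence} to obtain a small heap, and then invoke Lemma~\ref{lemma:heap-equivalence} once for each assignment of the universal variables. The execution differs in two respects.

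First, the paper verifies the required equivalence
\[
h \sim^{\I[x_1\leftarrow u_1]\ldots[x_n\leftarrow u_n]}_{\len{\varphi},\fv{\varphi},\emptyset} h'
\]
directly, by induction on $n$, and does not use your $v^*$-substitution and automorphism detour. It avoids that complication by choosing the restricted universe to be $\dom(h)\cup\I'(\fv{\varphi^\forall})$, which excludes $v$; hence $v$ is never a candidate value for any $u_i$, and the case you are guarding against simply does not arise. Your case split on $\card{\dom(h_0)\setminus\I_0(\fv{\varphi^\forall})}$ is also present in the paper, but only the large case is written out.

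Second, the paper does not carry out a separate structural induction on $\varphi$ for the descent from the large universe to the small one. After obtaining $\I'[\vec u],h\models_{\tinyseplog}\varphi$ via Lemma~\ref{lemma:heap-equivalence}, it restricts the sort and relies on the fact that $\I$ and $\I'$ agree on all variables. You are right that this step is not automatic in the presence of $\wand$, since the range of auxiliary heaps depends on the ambient universe; your proposed $\mathit{prun}$-based normalization is a legitimate way to make it rigorous. The paper, however, treats this as already absorbed by the quantifier-free small-model machinery of Yang's thesis rather than as a fresh induction. So your route is more explicit about this point, at the cost of extra scaffolding the paper does not introduce.
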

\proof{ 
``$\Rightarrow$'' Suppose that $\varphi^\forall$ has a model,
  i.e. $\I',h' \models_{\tinyseplog} \forall x_1 \ldots \forall
  x_n.\varphi(x_1, \ldots, x_n)$ for some interpretation $\I'$ and
  some heap $h' : U^{\I'} \rightharpoonup_{\mathrm{fin}} (U^{\I'})^k$. 
  We consider the case in which $\card{\dom(h') \setminus \I'(\fv{\varphi^\forall})} > \len{\varphi} + n$ 
  --- the other case $\card{\dom(h') \setminus \I'(\fv{\varphi^\forall})} \leq
  \len{\varphi} + n$ is an easy check left to the reader. Because
  $U^{\I'}$ is countable, we can choose a subset consisting of
  $\len{\varphi}+n$ locations from $\dom(h') \setminus
  \I'(\fv{\varphi^\forall})$, and let $L = \{\ell_1, \ldots, \ell_{\len{\varphi}+n}\}$ 
  be this set. By Lemma \ref{lemma:small-model-existence}, there exists a heap 
  $h : U^{\I'} \rightharpoonup_{\mathrm{fin}} (U^{\I'})^k$ such that: \begin{compactitem}
    \item $h \sim^{\I'}_{\len{\varphi}+n,\fv{\varphi^\forall},L} h'$, 
    \item $\dom(h) \setminus \I'(\fv{\varphi^\forall}) \subseteq L$ and 
    \item $h(\ell) = \mathit{prun}^v_{\I'(\fv{\varphi^\forall}) \cup L}(h'(\ell))$,
      for all $\ell \in \dom(h)$.
  \end{compactitem}
  We define $\I = \I'[U \leftarrow \dom(h) \cup \I'(\fv{\varphi^\forall})]$ and
  prove that $\I,h \models_{\tinyseplog} \varphi^\forall$. Clearly the pair $\I,h$
  satisfies the requirements from the statement of the lemma. We have
  to prove that $\I[x_1 \leftarrow u_1] \ldots [x_n\leftarrow u_n],h 
  \models_{\tinyseplog} \varphi(x_1, \ldots, x_n)$, for all 
  $u_1, \ldots, u_n \in U^\I = \dom(h) \cup \I'(\fv{\varphi^\forall})$. 
  By Lemma \ref{lemma:heap-equivalence}, it is sufficient to prove that:
  \[h \sim^{\I[x_1 \leftarrow u_1] \ldots [x_n \leftarrow u_n]}_{\len{\varphi},\fv{\varphi},\emptyset} h'
  \text{, for all } u_1, \ldots, u_n \in \dom(h) \cup
  \I'(\fv{\varphi^\forall})\] Since $\I$ and $\I'$ agree on all variables from
  $\vars$, and $\I'[x_1 \leftarrow u_1] \ldots [x_n \leftarrow
    u_n],h' \models_{\tinyseplog} \varphi(x_1, \ldots, x_n)$, for all
  $u_1, \ldots, u_n \in U^{\I'}$, by the hypothesis, we
  obtain $\I,h \models_{\tinyseplog} \forall x_1 \ldots \forall x_n ~.~
  \varphi(x)$. The proof is by induction on $n>0$.

  \vspace*{\baselineskip}
  \paragraph{The base case $n=1$.} Let us prove the requirements of Definition
  \ref{def:heap-equivalence}: \begin{compactenum}
  \item $\I[x_1 \leftarrow u_1](\fv{\varphi}) \cap \dom(h) =
    \I[x_1 \leftarrow u_1](\fv{\varphi}) \cap \dom(h')$: observe first
    that $\fv{\varphi} = \fv{\varphi^\forall} \cup \set{x_1}$, thus we have:
    \[\I[x_1 \leftarrow u_1](\fv{\varphi}) = \I(\fv{\varphi^\forall}) \cup
    \set{u_1} = \I'(\fv{\varphi^\forall}) \cup \set{u_1}\enspace.\] If
    $u_1 \in \dom(h) \setminus \I'(\fv{\varphi^\forall})$, then $u_1
    \in L$, because $\dom(h) \subseteq L$, and implicitly $u_1 \in
    \dom(h')$, since $L \subseteq \dom(h') \setminus
    \I'(\fv{\varphi^\forall})$. In this case, we have:
    \[\begin{array}{rcll}
    \I[x_1 \leftarrow u_1](\fv{\varphi}) \cap \dom(h) & = & 
    (\I'(\fv{\varphi^\forall}) \cup \set{u_1}) \cap \dom(h) \\
    & = & (\I'(\fv{\varphi^\forall}) \cap \dom(h)) \cup \set{u_1} \\
    & = & (\I'(\fv{\varphi^\forall}) \cap \dom(h')) \cup \set{u_1} & 
    \text{since $h \sim^{\I'}_{\len{\varphi}+n,\fv{\varphi^\forall},L} h'$} \\
    & = & \I[x_1 \leftarrow u_1](\fv{\varphi}) \cap \dom(h') & \enspace.
    \end{array}\]
    On the other hand, if $u_1 \in \I'(\fv{\varphi^\forall})$, we have
    $\I[x_1 \leftarrow u_1](\fv{\varphi}) = \I'(\fv{\varphi^\forall})$
    and the result follows immediately.
  \item $h(\ell') =_{\I[x_1 \leftarrow u_1](\fv{\varphi})} h'(\ell')$,
    for all $\ell' \in \I[x_1 \leftarrow u_1](\fv{\varphi}) \cap
    \dom(h)$: by the definition of $h$ (Lemma
    \ref{lemma:small-model-existence}), we have $h(\ell') =
    \mathit{prun}^v_{\I'(\fv{\varphi^\forall}) \cup L}(h'(\ell'))$,
    for all $\ell' \in \dom(h)$. Hence we have $h(\ell')
    =_{\I'(\fv{\varphi^\forall}) \cup L} h'(\ell')$ and, consequently
    $h(\ell') =_{\I'(\fv{\varphi^\forall}) \cup \set{u_1}} h'(\ell')$,
    for all $\ell' \in \dom(h)$, since $u_1 \in
    \I'(\fv{\varphi^\forall}) \cup L$.
  \item $\card{\dom(h')\setminus\I[x_1 \leftarrow
      u_1](\fv{\varphi})} = \card{\dom(h') \setminus
    (\I(\fv{\varphi^\forall}) \cup \set{u_1})} \geq \card{\dom(h')
    \setminus \I(\fv{\varphi^\forall})} - 1 > \len{\varphi}$, by the
    previous assumption. Since $h'
    \sim^{\I}_{\len{\varphi}+1,\fv{\varphi^\forall},L} h$, we get
    $\card{\dom(h) \setminus \I(\fv{\varphi^\forall})} \geq
    \len{\varphi}+1$, thus $\card{\dom(h) \setminus \I[x_1 \leftarrow
        u_1](\fv{\varphi^\forall})} = \card{\dom(h) \setminus
      (\I(\fv{\varphi^\forall}) \cup \set{u_1})} \geq \len{\varphi}$.
  \end{compactenum}

  \vspace*{\baselineskip}
  \paragraph{The induction step $n>1$.} 
  We prove the points of Definition \ref{def:heap-equivalence},
  similar to the base case: \begin{compactenum}
  \item $\I[x_1 \leftarrow u_1] \ldots [x_n \leftarrow
    u_n](\fv{\varphi}) \cap \dom(h) = I[x_1 \leftarrow u_1] \ldots
    [x_n \leftarrow u_n](\fv{\varphi}) \cap \dom(h')$: we distinguish
    the case\begin{inparaenum}[(i)]
      \item $u_1 \in \dom(h) \setminus \I[x_2 \leftarrow u_2]
        \ldots [x_n \leftarrow u_n](\fv{\varphi})$ from
  \item $u_1 \in \I[x_2 \leftarrow u_2] \ldots [x_n \leftarrow
    u_n](\fv{\varphi})$.
    \end{inparaenum} In the first case, we have:
    \[\begin{array}{rcll}
    \I[x_1 \leftarrow u_1] \ldots [x_n \leftarrow u_n](\fv{\varphi}) \cap \dom(h) & = & 
    (\I[x_2 \leftarrow u_2] \ldots [x_n \leftarrow u_n](\fv{\varphi}) \cup \set{u_1}) \cap \dom(h) \\
    & = & (\I[x_2 \leftarrow u_2] \ldots [x_n \leftarrow u_n](\fv{\varphi}) \cap \dom(h)) \cap \set{u_1} \\
    \text{by the induction hypothesis } & = & (\I[x_2 \leftarrow u_2] \ldots [x_n \leftarrow u_n](\fv{\varphi}) \cap \dom(h')) \cap \set{u_1} \\
    & = & \I[x_1 \leftarrow u_1] \ldots [x_n \leftarrow u_n](\fv{\varphi}) \cap \dom(h') & \enspace.
    \end{array}\]
    If $u_1 \in \I[x_2 \leftarrow u_2] \ldots [x_n \leftarrow
      u_n](\fv{\varphi})$, we have $\I[x_1 \leftarrow u_1] \ldots [x_n
      \leftarrow u_n](\fv{\varphi}) = \I[x_2 \leftarrow u_2] \ldots
    [x_n \leftarrow u_n](\fv{\varphi})$ and an application of the
    induction hypothesis concludes the proof.
  \item By the construction of $h$, we have $h(\ell)
    =_{\I(\fv{\varphi^\forall}) \cup L} h'(\ell)$, for all $\ell \in
    \dom(h)$, thus $h(\ell) =_{\I[x_1\leftarrow
        u_1]\ldots[x_n\leftarrow u_n](\fv{\varphi})} h'(\ell)$, for
    all $\ell \in dom(h) \cap \I[x_1\leftarrow
      u_1]\ldots[x_n\leftarrow u_n](\fv{\varphi})$.
  \item Similar to the base case, we have: 
    \[\begin{array}{rcl}
    \card{\dom(h')\setminus\I[x_1 \leftarrow u_1] \ldots [x_n
        \leftarrow u_n](\fv{\varphi})} & = & \card{\dom(h') \setminus
      (\I(\fv{\varphi^\forall}) \cup \set{u_1, \ldots, u_n})} \\ &
    \geq & \card{\dom(h') \setminus \I(\fv{\varphi^\forall})} - n \\ &
    > & \len{\varphi} \text{, and}
    \\ \card{\dom(h)\setminus\I[x_1 \leftarrow u_1] \ldots [x_n
        \leftarrow u_n](\fv{\varphi})} & = & \card{\dom(h') \setminus
      (\I(\fv{\varphi^\forall}) \cup \set{u_1, \ldots, u_n})}
    \\ & \geq & \len{\varphi}\enspace.
    \end{array}\]
  \end{compactenum}

  Since the direction ``$\Leftarrow$'' is trivial, this concludes our
  proof. \qed
}

We are ready to prove two decidability results, based on the above
small model property, concerning the cases where\begin{inparaenum}[(i)] 
\item $\locs$ is interpreted as a countable set with equality, and
\item $\locs$ is interpreted as an infinite countable set with no
  other operators than equality.
\end{inparaenum}

\subsection{Uninterpreted Locations without Cardinality Constraints}

In this section, we consider the satisfiability problem for the
fragment $\exists^*\forall^*\seplog(\euf)_{U,U^k}$, where the location
sort $U$ can be interpreted by any (possibly finite) countable set,
with no other operations than the equality, and the data sort consists
of $k$-tuples of locations. 

\begin{theorem}\label{thm:ul-bsr}
  The satisfiability problem for
  $\exists^*\forall^*\seplog(\euf)_{U,U^k}$ is
  \textsc{PSPACE}-complete.
\end{theorem}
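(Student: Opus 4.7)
The plan is to prove the two matching bounds separately. The lower bound is immediate: the quantifier-free fragment of $\seplog(\euf)_{U,U^k}$ is a syntactic subcase of $\exists^*\forall^*\seplog(\euf)_{U,U^k}$ (take $m=n=0$), and quantifier-free $\seplog$ satisfiability is already \textsc{PSPACE}-hard by the Calcagno--Yang--O'Hearn result cited in the related work. So only \textsc{PSPACE} membership needs a genuine argument.

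For the upper bound, I would first reduce to functional form: given $\phi \equiv \exists x_1 \ldots \exists x_m \forall y_1 \ldots \forall y_n ~.~ \varphi(x_1,\ldots,x_m,y_1,\ldots,y_n)$, introduce fresh Skolem constants $c_1, \ldots, c_m$ and work with $\varphi^\forall \equiv \forall y_1 \ldots \forall y_n ~.~ \varphi(c_1,\ldots,c_m,y_1,\ldots,y_n)$, whose satisfiability is equivalent to that of $\phi$. Now apply Lemma \ref{lemma:small-model-property}: if $\varphi^\forall$ is satisfiable, then it admits a model $(\I,h)$ whose universe $U^\I$ has cardinality bounded by $\len{\varphi} + \card{\fv{\varphi^\forall}} + n$, which is linear in $\len{\phi}$, and whose heap $h$ has domain bounded by the same polynomial, with each heap cell storing a $k$-tuple drawn from a polynomial-size set. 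Consequently the entire interpretation $(\I,h)$ admits a representation of polynomial size in $\len{\phi}$.

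The \textsc{NPSPACE} algorithm then proceeds in two phases. First, nondeterministically guess the polynomial-size candidate $(\I,h)$: the size of $U^\I$, the equalities between $\nil$ and the Skolem constants, and the graph of $h$. Second, verify that $\I,h \models_{\tinyseplog} \varphi^\forall$ by iterating, using polynomial-size counters, over all tuples $(u_1,\ldots,u_n) \in (U^\I)^n$; for each such tuple, check $\I[y_1\leftarrow u_1]\ldots[y_n\leftarrow u_n],h \models_{\tinyseplog} \varphi$. The latter is a model-checking question for a quantifier-free $\seplog$ formula against a polynomially represented heap model, which is in \textsc{PSPACE} by the same small-model / shallow-recursion argument underlying the Calcagno--Yang--O'Hearn decision procedure (the recursion for $*$ and $\wand$ only branches over sub-heaps of a polynomial-size universe and can be carried out in polynomial space). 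Reject if any tuple falsifies $\varphi$; otherwise accept. An invocation of Savitch's theorem converts the resulting \textsc{NPSPACE} procedure to a deterministic \textsc{PSPACE} one.

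The main obstacle is not the structure of the argument, which is a textbook small-model-then-guess-and-check, but justifying that the universal check can be completed in polynomial space while iterating over $\card{U^\I}^n$ tuples. This is handled by keeping only the current tuple (a counter of $n\log\card{U^\I}$ bits) in memory and reusing the polynomial workspace of the quantifier-free $\seplog$ model-checker for each tuple in turn; the magic wand, which is the only operator that implicitly quantifies over unboundedly many heaps in the unbounded setting, becomes a quantification over sub-heaps of the polynomial-size universe, hence still fits in polynomial space. This closes the upper bound and completes the proof.
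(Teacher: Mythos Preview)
Your proposal is correct and follows essentially the same route as the paper: reduce the hardness to the quantifier-free case via \cite{CalcagnoYangOHearn01}, Skolemize, invoke Lemma~\ref{lemma:small-model-property} for a polynomial-size model, guess it nondeterministically, and then enumerate all $n$-tuples from $U^\I$ using a polynomial-size counter while invoking the \textsc{PSPACE} model-checker for quantifier-free $\seplog$ from \cite{CalcagnoYangOHearn01} on each tuple. The only cosmetic differences are that the paper cites \cite[\S5]{CalcagnoYangOHearn01} directly for the per-tuple check rather than re-arguing the $\wand$ case, and it leaves the appeal to Savitch implicit.
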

\proof{ \textsc{PSPACE}-hardness follows from the fact that
  satisfiability is \textsc{PSPACE}-complete for quantifier-free
  $\seplog(\euf)_{U,U^k}$ \cite{CalcagnoYangOHearn01}. To prove
  membership in \textsc{PSPACE}, consider the formula $\phi \equiv
  \exists x_1 \ldots \exists x_m \forall y_1 \ldots \forall y_n ~.~
  \varphi(\vec{x}, \vec{y})$, where $\varphi$ is a quantifier-free
  $\seplog(\euf)_{U,U^k}$ formula. Let $\vec{c} = \tuple{c_1, \ldots,
    c_m}$ be a tuple of constant symbols, and $\widetilde{\phi} \equiv
  \forall y_1 \ldots \forall y_n ~.~ \varphi(\vec{c},\vec{y})$ be the
  functional form of $\phi$, obtained by replacing $x_i$ with $c_i$,
  for all $i=1,\ldots,m$. By Lemma \ref{lemma:small-model-property},
  $\widetilde{\phi}$ has a model if and only if it has a model $\I,h$
  such that: \begin{compactitem}
   \item $\card{U^\I}\leq \len{\varphi}+n+m$, 
   \item $\dom(h) \subseteq L \cup \vec{c}^\I$, 
   \item $\forall \ell \in \dom(h) ~.~ h(\ell) \in (\I(\vec{c}) \cup
     \{\nil^\I\} \cup L \cup \set{v})^k$,
  \end{compactitem}
  where $L \subseteq U^\I \setminus \I(\vec{c})$, $\card{L} =
  \len{\varphi}+m$ and $v \in U^\I \setminus (\I(\vec{c}) \cup
  \{\nil^\I\} \cup L)$. We describe below a nondeterministic
  polynomial space algorithm that decides satisfiability of
  $\widetilde{\phi}$. First, nondeterministically chose a model $\I,h$
  that meets the above requirements. Then we check, for each tuple
  $\tuple{u_1, \ldots, u_n} \in (U^\I)^n$ that $\I[y_1\leftarrow
    u_1]\ldots[y_n\leftarrow u_n],h \models_{\tinyseplog} \varphi$. In
  order to enumerate all tuples from $(U^\I)^n$ we need
  $n\cdot\lceil\log_2(\len{\varphi}+n+m)\rceil$ extra bits, and the check  
  for each such tuple can be done in \textsc{PSPACE}, according to
  \cite[\S5]{CalcagnoYangOHearn01}. \qed}

This result is somewhat surprising, because the classical
Bernays-Sch\"onfinkel fragment of first-order formulae with predicate
symbols (but no function symbols) and quantifier prefix
$\exists^*\forall^*$ is known to be \textsc{NEXPTIME}-complete
\cite[\S7]{Lewis80}. The explanation lies in the fact that the
interpretation of an arbitrary predicate symbol
$P(\xterm_1,\ldots,\xterm_n)$ cannot be captured using only points-to
atomic propositions, e.g.\ $\xterm_1 \mapsto (\xterm_2, \ldots,
\xterm_n)$, between locations and tuples of locations, due to the
interpretation of points-to's as heaps\footnote{If $\xterm_1 \mapsto
  (\xterm_2, \ldots, \xterm_n)$ and $\xterm_1 \mapsto (\xterm'_2,
  \ldots, \xterm'_n)$ hold, this forces $\xterm_i=\xterm'_i$, for all
  $i=2,\ldots,n$.} (finite partial functions).

The following lemma sets a first decidability boundary for
$\seplog(\euf)_{U,U^k}$, by showing how extending the quantifier
prefix to $\exists^*\forall^*\exists^*$ leads to undecidability. 

\begin{lemma}\label{lemma:exists-forall-exists-undecidability}
  The satisfiability problem for
  $\exists^*\forall^*\exists^*\seplog(\euf)_{U,U^k}$ is undecidable,
  if $k\geq2$.
\end{lemma}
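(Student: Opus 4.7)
The plan is to prove undecidability by reduction from a classical undecidable problem, namely the satisfiability of the $\exists^*\forall^*\exists^*$ prefix fragment of pure first-order logic with equality and a single binary predicate symbol $P$; this fragment is undecidable by Goldfarb's refinement of the G\"odel decision problem. Given a first-order sentence $\Phi \equiv \exists\vec{x}\,\forall\vec{y}\,\exists\vec{z}.\,\psi(\vec{x},\vec{y},\vec{z})$ in this class, the goal is to produce an equisatisfiable sentence in $\exists^*\forall^*\exists^*\seplog(\euf)_{U,U^k}$.

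The encoding uses the heap to represent the extension of $P$: because $k\geq 2$, each heap cell $\ell \mapsto (u,v)$ carries a pair of locations, so one can identify ``$P(u,v)$ holds'' with ``some cell of the heap contains the pair $(u,v)$''. The natural atom translation is $P(u,v) \leadsto \exists\ell.\,\top * (\ell \mapsto (u,v))$, and $\neg P(u,v)$ is rendered by its SL negation, which unfolds to $\forall\ell.\,\neg(\top * (\ell \mapsto (u,v)))$. After putting $\psi$ in NNF and replacing each atom by its translation, one would prenex the resulting SL matrix; the $\ell$-existentials introduced by the positive $P$-atoms can be absorbed into the innermost $\exists^*$ block alongside $\vec{z}$, and for any first-order model of $\Phi$ one can build a matching SL interpretation and heap realizing $P$.

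The main obstacle is that the universals arising from the $\neg P$-atoms, if prenexed in the standard way, introduce a fourth quantifier block and produce the wrong prefix $\exists^*\forall^*\exists^*\forall^*$. To land in the desired $\exists^*\forall^*\exists^*$ prefix, I would refine the encoding so that both $P(u,v)$ and $\neg P(u,v)$ become purely existential SL statements. This is where the hypothesis $k \geq 2$ is essential: I introduce two distinguished constants $T$ and $F$ in the outer $\exists^*$ block and represent the truth value of each pair by a ``tag'' stored in a second record field. Concretely, $P(u,v)$ (resp.\ $\neg P(u,v)$) translates to the existence of a pair of linked heap cells, one storing $(u,v)$ and the other tagging it with $T$ (resp.\ $F$). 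Faithfulness of the reduction then hinges on a consistency axiom forcing the tagging of each relevant pair to be unique; placing this axiom so that it fits inside the middle $\forall^*$ block of the target prefix, rather than spawning a new alternation, is the most delicate part of the argument. Once the encoding and the consistency axiom are in place, correctness is routine: any first-order model of $\Phi$ lifts to an SL model by materializing a heap whose $T$-tagged pairs are exactly the extension of $P$, and any SL model of the translation descends to a first-order model by reading off $P$ from the $T$-tagged pairs in the heap.
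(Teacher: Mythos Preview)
Your reduction idea is natural, but there is a genuine gap in the forward direction of the correctness argument, and it is not where you flag the ``most delicate part.'' The problem is the finiteness of heaps. You say that from a first-order model of $\Phi$ you will ``materialize a heap whose $T$-tagged pairs are exactly the extension of $P$'' (and whose $F$-tagged pairs are the complement). But heaps in $\seplog$ are finite partial functions by definition, while the G\"odel class with equality does \emph{not} have the finite model property: already in $\exists^*\forall^*\exists^*$ with equality and one binary relation you can axiomatize a total injective non-surjective function and force the domain to be infinite. For any $\Phi$ with only infinite models, your translation $\Phi'$ would require infinitely many tagged cells and hence have no $\seplog$ model, so the reduction breaks. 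Relatedly, you never state a \emph{totality} axiom (every relevant pair is tagged $T$ or $F$); without it the encodings of $P(u,v)$ and of $\neg P(u,v)$ can both fail on an untagged pair, so the backward direction is also not as routine as you claim. The consistency axiom you do mention is purely $\forall^*$ and slots into the middle block without difficulty; that is not the hard part.

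To repair the approach you would need a source problem that is undecidable already over \emph{finite} structures, and then argue explicitly that the $\seplog$ side can match a finite universe with a finite heap (including the fresh locations used as links and tags). This is essentially what the paper does, but from a different starting point: it reduces directly from the finite tiling problem. Each grid cell becomes one heap cell $x \mapsto (b,r,y,z,T)$ recording boundary flags, the right/down neighbours, and the tile type; the $\exists^*\forall^*\exists^*$ prefix arises because the successor and grid constraints (e.g.\ ``the right neighbour of a non-rightmost cell is allocated'') are of the shape $\forall\vec{u}\,\exists\vec{v}$, and a simple chaining trick brings the record arity down to $k=2$. Because an $n\times m$ tiling is finite, the heap that witnesses it is finite by construction, and the issue that blocks your argument simply does not arise.
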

\proof{  
  By reduction from the undecidability of the following tiling
  problem. Let $\mathcal{T}=\set{T_0, \ldots, T_s}$ be a set of tile
  types and $H,V \subseteq \mathcal{T} \times \mathcal{T}$ be two
  relations between tile types. Given $n,m>1$, a tiling of the
  $n\times m$ square is a function $\tau : [1,n] \times [1,m]
  \rightarrow \mathcal{T}$ such that: \begin{compactitem}
  \item $\tau(1,1) = T_0$, 
  \item $(\tau(i,j),\tau(i+1,j)) \in H$, for all $i \in [1,n-1]$ and $j \in [1,m]$, 
  \item $(\tau(i,j),\tau(i,j+1)) \in V$, for all $i \in [1,n]$ and $j \in [1,m-1]$. 
  \end{compactitem}
  The existence of $n,m>1$ and of a tiling of the $n\times m$ square
  is a well-known undecidable problem. We encode this as the
  satisfiability of a formula in
  $\exists^*\forall^*\exists^*\seplog(\euf)_{U,U^k}$. Let $\ell_0,
  \ell_1, t_0, \ldots, t_s$ be constant symbols of sort $U$.

  The basic idea is to represent each cell of the $n\times m$ grid by
  a heap cell, defined by the atomic proposition $\xterm \mapsto
  (b,r,\yterm,\zterm,T)$, where: \begin{compactitem}
  \item $b,r \in \set{\ell_0,\ell_1}$ act as binary flags indicating
    whether the cell belongs to the bottom $n$ row ($b=\ell_1$), and
    the righmost $m$ column ($r=\ell_1$), respectively,
  \item$\yterm$ and $\zterm$ are the horizontal (right) and vertical
    (below) successors of $\xterm$, and
  \item $T \in \set{t_0,\ldots,t_s}$ is the type of the tile covering
    the cell pointed to by $\xterm$.
  \end{compactitem}
  The finite relations $H,V \subseteq \mathcal{T} \times \mathcal{T}$
  are defined by the finite disjunctions $h(x,y),v(x,y)$ of equalities
  involving $t_0, \ldots, t_s$. We write $\xterm \hookrightarrow
  (b,r,\yterm,\zterm,T)$ for $\xterm \mapsto (b,r,\yterm,\zterm,T) *
  \top$, $\xterm \hookrightarrow (\_~,r,\yterm,\zterm,T)$ for
  $\bigvee_{b\in\set{\ell_0,\ell_1}} \xterm \hookrightarrow
  (b,r,\yterm,\zterm,T)$, $\xterm \hookrightarrow
  (b,\_~,\yterm,\zterm,T)$ for $\bigvee_{r\in\set{\ell_0,\ell_1}}
  \xterm \hookrightarrow (b,r,\yterm,\zterm,T)$, $\xterm
  \hookrightarrow (b,r,\yterm,\zterm,\_)$ for $\bigvee_{T \in
    \set{t_0,\ldots,t_s}} \xterm \hookrightarrow
  (b,r,\yterm,\zterm,T)$, and $\xterm \hookrightarrow (b,r,\_~,\_~,T)$
  for $\exists \yterm \exists \zterm ~.~ \xterm \hookrightarrow
  (b,r,\yterm,\zterm,T)$. We define the formula $\Phi$ as the
  conjunction of the formulae below, with the pairwise disequality
  constraint $\bigwedge_{c,c' \in \set{\ell_0, \ell_1, t_0, \ldots,
      t_s}}c \not\teq c'$:
  \begin{eqnarray}
  \exists \uterm \forall \yterm \forall \zterm ~.~ \uterm \hookrightarrow (0,0,\_~,\_~,t_0) \wedge \neg \yterm \hookrightarrow (\_~,\_~,\uterm,\_~,\_) \wedge \neg \zterm \hookrightarrow (\_~,\_~,\_~,\uterm,\_) \label{eq:start} \\
  \nonumber \\[-2mm]
  \forall \xterm \forall \yterm \forall \zterm ~.~ \bigwedge_{b\in\set{\ell_0,\ell_1}} \left[\xterm \hookrightarrow (b,0,\yterm,\zterm,\_) \Rightarrow \yterm \hookrightarrow (b,\_~,\_~,\_~,\_) \wedge \xterm \not\teq \yterm \right] \label{eq:not-right} \\
  \forall \xterm \forall \yterm \forall \zterm ~.~ \xterm \hookrightarrow (\_~,1,\yterm,\zterm,\_) \Rightarrow \xterm\teq\yterm \label{eq:right} \\
  \nonumber \\[-2mm]
  \forall \xterm \forall \yterm \forall \zterm ~.~ \bigwedge_{r\in\set{\ell_0,\ell_1}} \left[\xterm \hookrightarrow (0,r,\yterm,\zterm,\_) \Rightarrow \yterm \hookrightarrow (\_~,r,\_~,\_~,\_) \wedge \xterm \not\teq \zterm \right] \label{eq:not-bot} \\
  \forall \xterm \forall \yterm \forall \zterm ~.~ \xterm \hookrightarrow (1,\_~,\yterm,\zterm,\_) \Rightarrow \xterm\teq\zterm \label{eq:bot} \\
  \nonumber \\[-2mm]
  \forall \xterm \forall \yterm \forall \yterm' \forall \yterm'' \forall \zterm \forall \zterm' \forall \zterm'' ~.~ \xterm \hookrightarrow (\_~,\_~,\yterm,\zterm,\_) * \yterm \mapsto (\_~,\_~,\yterm',\zterm',\_) * \zterm \mapsto (\_~,\_~,\yterm'',\zterm'',\_) \nonumber \\ 
  \Rightarrow \zterm'\teq\yterm'' \label{eq:grid} \\
  \nonumber \\[-2mm]
  \forall \xterm \forall \yterm \forall \zterm ~.~ \bigwedge_{T,T' \in \set{t_0,\ldots,t_s}} \xterm \hookrightarrow (\_~,\_~,\yterm,\zterm,T) * \yterm \mapsto (\_~,\_~,\_~,\_~,T') \Rightarrow h(T,T') \label{eq:hor} \\
  \forall \xterm \forall \yterm \forall \zterm ~.~ \bigwedge_{T,T' \in \set{t_0,\ldots,t_s}} \xterm \hookrightarrow (\_~,\_~,\yterm,\zterm,T) * \zterm \mapsto (\_~,\_~,\_~,\_~,T') \Rightarrow v(T,T') \label{eq:ver}
  \end{eqnarray}
  The intuition of these formulae is as follows: (\ref{eq:start}) is
  the initial constraint, asking that the top-left corner is labeled
  with $T_0$, (\ref{eq:not-right}) requires that each cell not on the
  rightmost column has a distinct left successor, (\ref{eq:right}) is
  the dual constraint, asking that the left successor of each cell on
  the rightmost column is the cell itself, (\ref{eq:not-bot}) and
  (\ref{eq:bot}) are the similar constraints for the bottom
  successors, (\ref{eq:grid}) is the grid constraint, and
  (\ref{eq:hor}), (\ref{eq:ver}) are the horizontal and vertical
  constraints on the types of tiles. Observe that the quantifier
  prefix of the formulae (\ref{eq:start}), (\ref{eq:not-right}) and
  (\ref{eq:not-bot}) belongs necessarily to the language produced by
  the regular expression $\exists^*\forall^*\exists^*$.

  Observe that the formulae (\ref{eq:start} - \ref{eq:ver}) use $k=5$
  record fields. We can reduce the value of $k$ to $2$, by using the
  following encoding of the atomic propositions in $\Phi$:
  \[\begin{array}{rcl}
  \xterm \hookrightarrow (b,r,\yterm,\zterm,T) & \equiv & \exists \xterm_0
  \exists \xterm_1 \exists \xterm_2 \exists \xterm_3 ~.~ \xterm \hookrightarrow
  (\xterm_0,\xterm_1) * \xterm_1 \mapsto (\yterm,\zterm) * \\ 
  && \xterm_0 \mapsto (\xterm_2, b) * \xterm_2 \mapsto (\xterm_3, r) * \xterm_3 \mapsto (T,\nil)
  \end{array}\]
  It is easy to show now that $\Phi$ has a model iff there exists
  $n,m>1$ and a tiling of the $n\times m$ grid, which proves the
  undecidability of the satisfiability problem for
  $\exists^*\forall^*\exists^*\seplog(\euf)_{U,U^k}$, when
  $k\geq2$. \qed
}

Observe that the result of Lemma
\ref{lemma:exists-forall-exists-undecidability} sets a fairly tight
boundary between the decidable and undecidable fragments of
$\seplog$. On the one hand, simplifying the quantifier prefix to
$\exists^*\forall^*$ yields a decidable fragment (Theorem
\ref{thm:ul-bsr}), whereas $\seplog(\euf)_{U,U}$ ($k=1$) without the
magic wand ($\wand$) is decidable with non-elementary time complexity,
even when considering an unrestricted quantifier prefix
\cite{BrocheninDemriLozes11}.

\subsection{Uninterpreted Locations with Cardinality $\aleph_0$}

We consider the stronger version of the satisfiability problem for
$\exists^*\forall^*\seplog(\euf)_{U,U^k}$, where $U$ is interpreted as
an infinite countable set (of cardinality $\aleph_0$) with no function
symbols, other than equality. Instances of this problem occur when,
for instance, the location sort is taken to be $\Int$, but no
operations are used on integers, except for testing equality.

Observe that this restriction changes the satisfiability status of
certain formulae. For instance, $\exists \xterm \forall \yterm ~.~
\yterm \not\teq \nil \Rightarrow (\yterm \mapsto \xterm * \top )$ is
satisfiable if $U$ is interpreted as a finite set, but becomes
unsatisfiable when $U$ is infinite. The reason is that this formula
requires every location from $U^\I$ apart from $\nil$ to be part of
the domain of the heap, which is impossible due the fact that only
finite heaps are considered by the semantics of $\seplog$.

In the following proof, we use the formula $\mathsf{alloc}(\xterm)
\equiv \xterm \mapsto (\xterm,\ldots,\xterm) \wand \bot$, expressing
the fact that a location variable $\xterm$ is \emph{allocated},
i.e.\ its interpretation is part of the heap's domain
\cite{BrocheninDemriLozes11}. Intuitively, we reduce any instance of
the $\exists^*\forall^*\seplog(\euf)_{U,U^k}$ satisfiability problem,
with $U$ of cardinality $\aleph_0$, to an instance of the same problem
without this restriction, by the following cut-off argument: if a free
variable is interpreted as a location which is neither part of the
heap's domain, nor equal to the interpretation of some constant, then
it is not important which particular location is chosen for that
interpretation.

\begin{theorem}\label{thm:ul-aleph-zero-bsr}
  The satisfiability problem for
  $\exists^*\forall^*\seplog(\euf)_{U,U^k}$ is
  \textsc{PSPACE}-complete if $U$ is required to have cardinality
  $\aleph_0$.
\end{theorem}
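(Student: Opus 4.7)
The plan is to reduce to Theorem~\ref{thm:ul-bsr} via a polynomial-size translation that internalizes the $\aleph_0$-cardinality assumption on $U$. \textsc{PSPACE}-hardness is inherited from the quantifier-free fragment \cite{CalcagnoYangOHearn01}, whose standard hardness reductions produce sentences whose satisfiability is indifferent to the cardinality of $U^\I$. For \textsc{PSPACE}-membership, I would transform any sentence $\phi \equiv \exists \vec{x}\,\forall \vec{y}\,.\,\varphi(\vec{x},\vec{y})$ into an equisatisfiable formula $\phi^\star$ of $\exists^*\forall^*\seplog(\euf)_{U,U^k}$ of polynomial size, whose unrestricted satisfiability coincides with that of $\phi$ under $|U^\I|=\aleph_0$; Theorem~\ref{thm:ul-bsr} then decides $\phi^\star$ in \textsc{PSPACE}. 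Concretely, I would introduce a fresh existential constant $x_\star$ intended to represent a ``canonical generic'' location:
\begin{align*}
\phi^\star \equiv\; \exists \vec{x}\,\exists x_\star\,.\,& \Big(\bigwedge_i x_\star\not\teq x_i \;\wedge\; x_\star\not\teq\nil \;\wedge\; \neg\mathsf{alloc}(x_\star)\Big) \;\wedge \\
& \forall \vec{y}\,.\,G(x_\star,\vec{y}) \Rightarrow \varphi(\vec{x},\vec{y}),
\end{align*}
where $G(x_\star,\vec{y}) \equiv \bigwedge_{j}\bigl(\mathsf{alloc}(y_j) \vee y_j\teq\nil \vee y_j\teq x_\star \vee \bigvee_i y_j\teq x_i\bigr)$. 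Since $\mathsf{alloc}$ is quantifier-free, $\phi^\star$ remains in the fragment.

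The ($\Rightarrow$) direction of equisatisfiability is direct: in any $\aleph_0$-model $\I,h$ of $\phi$, the visible set $\dom(h)\cup\I(\vec{x})\cup\{\nil^\I\}$ is finite, so a witness $x_\star^\I$ satisfying the existential guard exists, and $G\Rightarrow\varphi$ is immediate from $\forall\vec{y}\,.\,\varphi$. For the ($\Leftarrow$) direction, given $\I',h'\models\phi^\star$, I would adjoin a countably infinite fresh set $F$ to $U^{\I'}$, obtaining an interpretation $\I$ with $|U^\I|=\aleph_0$ that keeps $h'$ and all constants unchanged, and verify $\I,h'\models\phi$ by a case split on an arbitrary instantiation $\vec{u}\in(U^\I)^n$: if every $u_j$ satisfies $G$ in $\I$, the hypothesis applies verbatim; otherwise each ``generic'' component $u_j$ (either in $F$, or in $U^{\I'}$ but unallocated, non-$\nil$, and distinct from every $x_i^{\I'}$ and from $x_\star^{\I'}$) must be shown interchangeable with $x_\star^{\I'}$ in the semantics of $\varphi$. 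Iterating this swap across all non-$G$ components of $\vec{u}$ reduces the check to the guarded case.

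The principal obstacle is justifying this interchangeability, since naively swapping $u_j$ with $x_\star^{\I'}$ may alter tuples occurring in the range of $h'$. I would address this by first refining Lemma~\ref{lemma:small-model-property} so that the representative model of $\phi^\star$ can be chosen with $x_\star^{\I'}$ additionally outside $\mathrm{range}(h')$---this costs at most one extra invisible location and thus preserves the polynomial small-model bound---and then invoking a standard heap-automorphism argument: the permutation of $U^\I$ exchanging $u_j$ and $x_\star^{\I'}$ fixes $\dom(h')\cup\mathrm{range}(h')\cup\I(\vec{x})\cup\{\nil^\I\}$ pointwise, hence fixes $h'$, and thereby preserves the satisfaction of the quantifier-free formula $\varphi$. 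A careful invariant maintenance during the iterative substitution, in the style of Lemma~\ref{lemma:heap-equivalence}, then closes the argument.
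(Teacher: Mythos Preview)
Your overall plan—introduce fresh ``generic'' witnesses, guard the universals to range only over visible-or-generic locations, and appeal to Theorem~\ref{thm:ul-bsr}—is exactly the paper's strategy, but a \emph{single} constant $x_\star$ is not enough, and your $(\Leftarrow)$ direction fails. Take
\[
\phi \;\equiv\; \forall y_1\,\forall y_2\,.\,\bigl(\neg\mathsf{alloc}(y_1)\wedge\neg\mathsf{alloc}(y_2)\wedge y_1\not\teq\nil\wedge y_2\not\teq\nil\bigr)\Rightarrow y_1\teq y_2.
\]
Over any universe of cardinality $\aleph_0$ this is unsatisfiable (the heap is finite, so there exist two distinct unallocated non-$\nil$ locations). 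But your $\phi^\star$ \emph{is} satisfiable: with $U^{\I'}=\{\nil^{\I'},a\}$, the empty heap, and $x_\star^{\I'}=a$, the guard $G$ forces $y_1,y_2\in\{\nil^{\I'},a\}$, and the implication holds for every such pair (the only case with a true antecedent is $y_1=y_2=a$).

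The automorphism step cannot close this gap. If $u_1\neq u_2$ are both generic, no composition of transpositions with $x_\star^{\I'}$ sends \emph{both} to $x_\star^{\I'}$: automorphisms are bijections and preserve disequality, so after any sequence of swaps the image tuple still has two distinct entries at those positions, at most one of which equals $x_\star^{\I'}$. Your ``iterating this swap'' therefore never reaches the guarded case for such $\vec u$. The paper's fix is to introduce $n$ fresh constants $d_1,\ldots,d_n$, one per universal variable $y_i$, each constrained (by its formula $\mathsf{external}$) to be unallocated and distinct from the Skolem constants; the guard then allows $y_i\teq d_i$ independently for each $i$, which is exactly what is needed to cover tuples of pairwise distinct generic locations. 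This costs only a linear number of extra constants, so the polynomial reduction and the \textsc{PSPACE} bound survive.
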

\proof{ \textsc{PSPACE}-hardness follows from the
  \textsc{PSPACE}-completeness of the satisfiability problem for
  quantifier-free $\seplog$, with uninterpreted locations
  \cite[\S5.2]{CalcagnoYangOHearn01}. Since the reduction from
  \cite[\S5.2]{CalcagnoYangOHearn01} involves no universally quantified
  variables, the $\aleph_0$ cardinality constraint has no impact on
  this result.
 
  Let $\exists x_1 \ldots \exists x_m \forall y_1 \ldots \forall y_n
  ~.~ \varphi(\vec{x},\vec{y})$ be a formula, and $\forall y_1 \ldots
  \forall y_n ~.~ \varphi(\vec{c},\vec{y})$ be its functional form,
  obtained by replacing each $x_i$ with $c_i$, for $i=1,\ldots,m$. We
  consider the following formulae, parameterized by $y_i$, for $i=1,\ldots,n$:
  \[\begin{array}{rcl}
  \psi_0(y_i) & \equiv & \mathsf{alloc}(y_i) \\
  \psi_1(y_i) & \equiv & \bigvee_{j=1}^m y_i=c_j \\
  \psi_2(y_i) & \equiv & y_i=d_i \\
  \mathsf{external} & \equiv & \bigwedge_{i=1}^n 
  (\neg\mathsf{alloc}(d_i) \wedge \bigwedge_{j=1}^m d_i \neq c_j)
  \end{array}\]
  where $\{d_i \mid i = 1, \ldots, n\}$ is a set of fresh constant
  symbols.
  We show the following fact:

  \begin{fact}\label{fact:aleph0-equisat}
    There exists an interpretation $\I$ and a heap $h$ such that
    $\card{U^\I}=\aleph_0$ and $\I,h \models_{\tinyseplog} \forall y_1
    \ldots \forall y_n ~.~ \varphi(\vec{c},\vec{y})$ iff there exists
    an interpretation $\I'$, not constraining the cardinality of
    $U^{\I'}$, and a heap $h'$ such that:
    \[\I',h' \models_{\tinyseplog} \mathsf{external} \wedge \forall y_1 \ldots \forall y_n 
    \bigwedge_{\tuple{t_1,\ldots,t_n} \in \set{0,1,2}^n}
    \underbrace{\bigwedge_{i=1}^n \left(\psi_{t_i}(y_i) \Rightarrow
      \varphi(\vec{c},\vec{y})\right)}_{\Psi_{\tuple{t_1,\ldots,t_n}}}\]
  \end{fact}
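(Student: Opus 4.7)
The plan is to prove each direction separately; the forward direction is nearly immediate, while the backward direction hinges on a locality argument for ``external'' locations. For ``$\Rightarrow$'', suppose $\I, h \models_{\tinyseplog} \forall y_1 \ldots \forall y_n \,.\, \varphi(\vec{c}, \vec{y})$ with $\card{U^\I} = \aleph_0$. The finite set $\dom(h) \cup \I(\vec{c}) \cup \{\nil^\I\}$ leaves infinitely many locations free, so I pick $n$ pairwise distinct locations outside it and interpret $d_1, \ldots, d_n$ as these locations, yielding an extension $\I_d$. Then $\I_d, h \models_{\tinyseplog} \mathsf{external}$, and every implication $\psi_{t_i}(y_i) \Rightarrow \varphi(\vec{c}, \vec{y})$ holds trivially because the consequent already holds for every assignment of $\vec{y}$.

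For ``$\Leftarrow$'', suppose $\I', h' \models_{\tinyseplog} \mathsf{external} \wedge \forall \vec{y} \bigwedge_{\vec{t}} \Psi_{\vec{t}}$. If $\card{U^{\I'}} < \aleph_0$, I extend $U^{\I'}$ by countably many fresh locations (outside $\dom(h') \cup \I'(\vec{c}, \vec{d}) \cup \{\nil^{\I'}\}$) to obtain $\I$ of cardinality $\aleph_0$, keeping the heap $h'$ and preserving $\mathsf{external}$. For any $\vec{u} \in (U^\I)^n$, classify each component by setting $t_i = 0$ if $u_i \in \dom(h')$, else $t_i = 1$ if $u_i \in \I(\vec{c})$, else $t_i = 2$, and form $\vec{u}'$ by replacing each $u_i$ with $t_i = 2$ by $d_i^\I$ (itself external, by $\mathsf{external}$). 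Then $\bigwedge_i \psi_{t_i}(u'_i)$ holds, so the hypothesis yields $\I[\vec{y} \leftarrow \vec{u}'], h' \models_{\tinyseplog} \varphi(\vec{c}, \vec{y})$.

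The main obstacle is the final step: transferring satisfaction of $\varphi$ from $\vec{u}'$ back to $\vec{u}$, which differ only in positions where both values are external. This requires a \emph{locality lemma} asserting that any quantifier-free $\seplog(\euf)_{U,U^k}$ formula has the same truth value under two variable assignments related by a bijective renaming of the external locations (those outside $\dom(h') \cup \I(\vec{c}, \vec{d}) \cup \{\nil^\I\}$) that fixes every non-external location. I would prove this by structural induction on $\varphi$: points-to atoms with an external endpoint evaluate to $\bot$ under both assignments, since external locations lie outside $\dom(h')$; pure equality atoms are preserved by any bijection; and the separation operators $*$ and $\wand$ decompose the heap independently of the variable assignment. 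The $\aleph_0$ extension of $U$ supplies enough external locations to realise the required bijection sending each external $u_i$ to $d_i^\I$ (and, if the $d_i^\I$ need to be pairwise distinct for injectivity, to pre-reinterpret them as distinct external witnesses, which is still consistent with $\mathsf{external}$), closing the argument.
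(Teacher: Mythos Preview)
Your overall strategy matches the paper's in spirit: the forward direction is immediate, and the backward direction hinges on an ``external locations are interchangeable'' principle. The main structural difference is that the paper first invokes Lemma~\ref{lemma:small-model-property} to \emph{shrink} $\I'$ to a model $\I''$ whose universe is exactly $\dom(h)\cup\I'(\fv{\Psi})$, so that every location is already allocated, some $c_j$, or some $d_j$; only then does it extend to cardinality $\aleph_0$. You instead extend directly and rely on a bijection-based locality lemma. Both routes ultimately need the same interchangeability principle, and the paper's own proof is terse at precisely this point (``it is not important which location is used for the interpretation of $d_j$'').

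There is, however, a genuine gap in your locality step. You form $\vec{u}'$ by sending every external $u_i$ to $d_i^\I$ and then appeal to a bijection of external locations taking $\vec{u}'$ to $\vec{u}$. But if two external coordinates coincide, say $u_i=u_j$ with $i\neq j$, while $d_i^\I\neq d_j^\I$, then $u'_i\neq u'_j$ and no bijection can identify $\vec{u}'$ with $\vec{u}$: bijections preserve the equality pattern among the coordinates. Your proposed fix---``pre-reinterpret [the $d_i$] as distinct external witnesses''---goes in the wrong direction: making the $d_i$ distinct does not help when the $u_i$ collide; if anything you would need the $d_i$ to \emph{mimic} the equality pattern of the $u_i$, and you cannot do that for all $\vec{u}$ simultaneously with a single interpretation.

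Two ways to close the gap. First, note that as written $\Psi_{\vec t}=\bigwedge_i(\psi_{t_i}(y_i)\Rightarrow\varphi)$, so $\bigwedge_{\vec t}\Psi_{\vec t}$ is equivalent to $\bigl(\bigvee_i(\psi_0(y_i)\vee\psi_1(y_i)\vee\psi_2(y_i))\bigr)\Rightarrow\varphi$: a \emph{single} classified coordinate suffices. Then for any $\vec u$ with at least one external coordinate $u_i$, the transposition swapping $u_i$ and $d_i^\I$ (both external) is a legitimate bijection fixing the heap and the $c_j$, sends $u_i$ to $d_i$, and lets the hypothesis fire; the remaining case (no external coordinates) is immediate. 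Second---and this is closer to what the paper does---first pass to the small model of Lemma~\ref{lemma:small-model-property}, where the only external locations are the $d_j$ themselves, argue there, and then extend; the bijections you need are then permutations of $\{d_1,\ldots,d_n\}$.
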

  \proof{
    ``$\Rightarrow$'' This direction is immediate, because for
    each location $\ell \in U^\I$, we have $\I[y\leftarrow\ell],h
    \models_{\tinyseplog} \psi_0(y) \vee \psi_1(y) \vee \psi_2(y)$ and
    $\I[y_1 \leftarrow \ell_1] \ldots [y_n \leftarrow \ell_n],h
    \models_{\tinyseplog} \varphi(\vec{c},\vec{y})$, for all
    $\tuple{\ell_1,\ldots,\ell_n} \in (U^\I)^n$. ``$\Leftarrow$''
    Because the cardinality of $U^{\I'}$ is unconstrained, by Lemma
    \ref{lemma:small-model-property}, there exists an interpretation
    $\I''$ and a heap $h$ such that
    $\I''=\I'[U\leftarrow\dom(h)\cup\I'(\fv{\Psi})]$ and $\I'',h
    \models_{\tinyseplog} \mathsf{extern} \wedge \forall y_1 \ldots
    \forall y_n \bigwedge_{\tuple{t_1,\ldots,t_n} \in \set{0,1,2}^n}
    \Psi_{\tuple{t_1,\ldots,t_n}}$. We obtain that
    $\I''[y_1\leftarrow\ell_1]\ldots[y_n\leftarrow\ell_n]
    \models_{\tinyseplog} \varphi(\vec{c},\vec{y})$ for each tuple
    $\tuple{\ell_1,\ldots,\ell_n} \in (U^{\I''})^n$ where either
    \begin{inparaenum}[(i)]
      \item $\ell_i \in \dom(h)$, 
      \item $\ell_i = \I''(c_j)$ for some $j=1,\ldots,m$, or
      \item $\ell_i = \I''(d_j)$ for some $j=1,\ldots,n$ and neither
        of the previous hold.
    \end{inparaenum}
    Because it is not important which location is used for the
    interpretation of $d_j$, $j=1,\ldots,n$, we have $\I,h
    \models_{\tinyseplog} \forall y_1 \ldots \forall y_n ~.~
    \varphi(\vec{c},\vec{y})$ for every extension $\I$ of $\I''$ such
    that $\card{U^{\I}}=\aleph_0$. \qed
  }

  To show membership in \textsc{PSPACE}, consider a nondeterministic
  algorithm that choses $\I'$ and $h'$ and uses $2n$ extra bits to
  check that $\I',h' \models_{\tinyseplog} \mathsf{extern} \wedge
  \forall y_1 \ldots \forall y_n ~.~ \Psi_{\tuple{t_1,\ldots,t_n}}$
  separately, for each $\tuple{t_1,\ldots,t_n} \in \set{0,1,2}^n$. By
  Lemma \ref{lemma:small-model-property}, the sizes of $\I'$ and $h'$
  are bounded by a polynomial in the size of
  $\Psi_{\tuple{t_1,\ldots,t_n}}$, which is polynomial in the size of
  $\varphi$, and by Theorem \ref{thm:ul-bsr}, each of these checks can
  be done in polynomial space. \qed}

\subsection{Integer Locations with Linear Arithmetic}\label{sec:undecidability}

In the rest of this section we show that the
Bernays-Sch\"onfinkel-Ramsey fragment of $\seplog$ becomes undecidable
as soon as we use integers to represent the set of locations and
combine $\seplog$ with linear integer arithmetic ($\lia$). The proof
relies on an undecidability argument for a fragment of Presburger
arithmetic with one monadic predicate symbol, interpreted over finite
sets. Formally, we denote by $(\exists^*\forall^* \cap
\forall^*\exists^*)-\lia$ the set of formulae consisting of a
conjunction between two linear arithmetic formulae, one with
quantifier prefix in the language $\exists^*\forall^*$, and another
with quantifier prefix $\forall^*\exists^*$.

\begin{theorem}\label{thm:bsr-presburger}
  The satisfiability problem is undecidable for the fragment
  $(\exists^*\forall^* \cap \forall^*\exists^*)-\lia$, with one monadic
  predicate symbol, interpreted over finite sets of integers.
\end{theorem}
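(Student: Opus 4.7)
The plan is to refine the classical undecidability argument of Halpern~\cite{Halpern91} for Presburger arithmetic with one monadic predicate, showing that undecidability already holds within the restricted fragment $(\exists^*\forall^* \cap \forall^*\exists^*)$-$\lia$ interpreted over finite sets. I reduce from Hilbert's tenth problem. Given a Diophantine equation $E(x_1, \ldots, x_k) = 0$, by introducing auxiliary variables $z_\ell$ for each product of variables appearing in $E$, the question of solvability of $E$ reduces to that of a conjunction between a single linear equation over the $x_i, z_\ell$ and finitely many multiplicative atoms $z_\ell = u_\ell \cdot v_\ell$ with $u_\ell, v_\ell \in \{x_1, \ldots, x_k\}$. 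The task is then to express each such multiplicative atom using the monadic predicate $P$ (a finite subset of $\mathbb{Z}$) within the required quantifier shape.

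The construction takes the form $\Phi_E = \Phi_{\exists\forall} \wedge \Phi_{\forall\exists}$, where the shared data---the solution variables $x_i$, the product variables $z_\ell$, and auxiliary scale constants $N_\ell$ locating disjoint encoding regions inside $\mathbb{Z}$---are introduced as free constants of the signature, so that any model assigns them consistent values across both conjuncts. For each atom $z = uv$, the encoding uses $P$ to store an arithmetic progression of common difference $v$ in a region delimited by Skolem boundaries, together with auxiliary structures pinning down its length to $u$. The $\exists^*\forall^*$ conjunct $\Phi_{\exists\forall}$ assembles the existential witnesses for all these constants (after Skolemisation) together with purely universal assertions: the boundary markers lie in $P$, no element of $P$ lies outside the designated regions, the regions are well-separated, and the linear Diophantine equation holds over the constants. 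The $\forall^*\exists^*$ conjunct $\Phi_{\forall\exists}$ assembles the closure conditions characterising the progressions and the synchronisation of their lengths, each of the form ``for every $e \in P$ with certain properties, there exists $e' \in P$ related to $e$ by a specific linear offset''.

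The main obstacle is that $\lia$ does not permit products of two variables or Skolem constants, so one cannot write a term $u \cdot v$ or a formula $\forall i.\, N + iv \in P$ directly. The value progression must therefore be specified implicitly, via closure under the map $e \mapsto e + v$ (a valid $\lia$ term, since $v$ is a constant and $e$ a variable) together with boundary markers at $0$ and $z$; this forces $v \mid z$ and the progression to be uniquely determined, yielding $z = m v$ for some integer $m \ge 0$. Forcing the quotient $m$ to equal the separately encoded value $u$ without an explicit cardinality comparison is the delicate point; following the spirit of Halpern's original technique, this is achieved by coupling the $v$-progression with an auxiliary ``diagonal'' block in $P$ through a linear matching equation of the form $e_c + e_v - e_d = (\mathrm{const})$, so that a single $\forall\exists$-shaped closure condition advances all blocks in lockstep from the initial boundary to the upper boundary. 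The reverse direction of the reduction is routine: from any solution of $E$ one constructs a model by populating $P$ with the prescribed finite blocks at sufficiently large, widely separated scales, which completes the equisatisfiability.
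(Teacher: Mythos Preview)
Your high-level plan---reduce from Hilbert's tenth problem and encode the missing multiplication using the monadic predicate---is exactly the paper's. The divergence is in how multiplication is encoded, and this is where your argument has a real gap.

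You propose to encode each atom $z = u v$ by an arithmetic progression of step $v$ stored in a dedicated region of $P$, and to pin its length to $u$ by synchronising it with a ``diagonal'' block through a linear relation $e_c + e_v - e_d = \mathrm{const}$. Two problems arise. First, the closure conditions you describe (``for every $e \in P$ \ldots there exists $e'$ related to $e$ by a specific linear offset'') are, as stated, purely universal: if the offset is the shared constant $v$, then $e' = e + v$ is already a term and no genuine inner existential is needed, so it is unclear where your $\forall^*\exists^*$ conjunct actually acquires its $\exists$. Second, and more seriously, the linear matching $e_c + e_v - e_d = \mathrm{const}$ does not induce a bijection between indices: for elements $N_V + i v$, $N_C + j$, $N_D + k(v{+}1)$ the relation becomes $k(v{+}1) = i v + j$, which has many integer solutions besides $i = j = k$. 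A condition of the shape $\forall e_C\,\exists e_V, e_D\,(\text{matched})$ can therefore be satisfied in models with $z \neq u v$; one can already see this with small parameters (e.g.\ $v=2$, $u=m+v+1$). The ``delicate point'' you flag is thus not merely delicate but unresolved, and it is not evident it can be closed within the required prefix class.

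The paper sidesteps this entirely by first reducing general multiplication to \emph{squaring} via the polarisation identity $2xy = (x{+}y)^2 - x^2 - y^2$, so that only atoms $t = s^2$ remain alongside linear equations. Then $P$ is intended to be a single finite initial segment $\{0,1,4,\ldots,N^2\}$ of perfect squares. This set admits a purely local characterisation: any three \emph{consecutive} elements $x<y<z$ of $P$ satisfy $z-y = y-x+2$. Because ``consecutive'' (no $P$-element strictly between) sits in the antecedent of the implication, its inner $\forall u$ flips to $\exists u$ in prenex form, giving a closed $\forall^*\exists^*$ sentence $\mathsf{sqr}$ that mentions no Skolem constants at all. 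Each atom $t = s^2$ is then expressed by the $\exists^*\forall^*$ formula ``$P(t) \wedge P(t+2s+1) \wedge \forall z\,(t<z<t+2s+1 \Rightarrow \neg P(z))$''. The reduction to squaring is the key idea missing from your proposal; once you have it, no block-length synchronisation is needed.
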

\proof{
 We reduce from the following variant of \emph{Hilbert's 10th
   Problem}: given a multivariate Diophantine polynomial $R(x_1,
 \ldots, x_n)$, the problem ``does $R(x_1, \ldots, x_n) = 0$ have
 a solution in $\nat^n$ ?'' is undecidable \cite{Matyiasevich70}.

 By introducing sufficiently many free variables, we encode 
 $R(x_1, \ldots, x_n) = 0$ as an equisatisfiable Diophantine
 system of degree at most two, containing only equations of the form
 $x=yz$ (resp. $x=y^2$) and linear equations $\sum_{i=1}^k a_ix_i =
 b$, where $a_1,\ldots,a_k,b\in\zed$. Next, we replace each equation
 of the form $x=yz$, with $y$ and $z$ distinct variables, with the
 quadratic system $2x+t_y+t_z=t_{y+z} \wedge t_y=y^2 \wedge t_z=z^2
 \wedge t_{y+z} = (y+z)^2$, where $t_y,t_z$ and $t_{y+z}$ are fresh
 (free) variables. In this way, we replace all multiplications between
 distinct variables by occurrences of the squaring function. Let
 $\Psi_{R(x_1,\ldots,x_n)=0}$ be the conjunction of the above
 equations. It is manifest that $R(x_1, \ldots, x_n) = 0$ has a
 solution in $\nat^n$ iff $\Psi_{R(x_1,\ldots,x_n)=0}$ is satisfiable,
 with all free variables ranging over $\nat$.

Now we introduce a monadic predicate symbol $P$, which is intended to
denote a (possibly finite) set of consecutive perfect squares,
starting with $0$. To capture this definition, we require the
following:
\[\begin{array}{l}\tag{$\mathsf{sqr}$}\label{eq:sqr}
P(0) \wedge P(1) \wedge \forall x \forall y \forall z ~.~ 
P(x) \wedge P(y) \wedge P(z) \wedge x < y < z ~\wedge \\
(\forall u ~.~ x < u < y \vee y < u < z \Rightarrow \neg P(u)) 
\Rightarrow z-y = y-x+2
\end{array}\]
Observe that this formula is a weakening of the definition of the
infinite set of perfect squares given by Halpern \cite{Halpern91},
from which the conjunct $\forall x \exists y ~.~ y>x \wedge P(y)$,
requiring that $P$ is an infinite set of natural numbers, has been
dropped. Moreover, notice that \ref{eq:sqr} has quantifier prefix
$\forall^3\exists$, due to the fact that $\forall u$ occurs implicitly under
negation, on the left-hand side of an implication. If $P$ is
interpreted as a finite set $P^\I = \set{p_0, p_1, \ldots, p_N}$ such
that (w.l.o.g.) $p_0 < p_1 < \ldots < p_N$, it is easy to show, by
induction on $N>0$, that $p_i = i^2$, for all $i=0,1,\ldots,N$.

The next step is encoding the squaring function using the monadic
predicate $P$. This is done by replacing each atomic proposition
$x=y^2$ in $\Psi_{R(x_1,\ldots,x_n)=0}$ by the formula $\theta_{x=y^2}
\equiv P(x) \wedge P(x+2y+1) \wedge \forall z ~.~ x<z<x+2y+1
\Rightarrow \neg P(z)$. 

\begin{fact}\label{fact:theta}
For each interpretation $\I$ mapping $x$ and $y$ into $\nat$, $\I
\models x=y^2$ iff $\I$ can be extended to an interpretation of $P$ as
a finite set of consecutive perfect squares such that $\I \models
\theta_{x=y^2}$.
\end{fact}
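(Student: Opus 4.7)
The plan is to prove the two directions separately, exploiting the already-established characterization that, in any model of \ref{eq:sqr}, a finite interpretation $P^\I = \{p_0 < p_1 < \ldots < p_N\}$ satisfies $p_i = i^2$ for every $i \in \{0, \ldots, N\}$. In other words, ``a finite set of consecutive perfect squares'' means exactly $\{0^2, 1^2, \ldots, N^2\}$ for some $N \geq 1$, and the interpretation of $P$ is completely determined by choosing this cutoff $N$.

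For the ``$\Rightarrow$'' direction, assume $x^\I = (y^\I)^2$ with $y^\I \in \nat$. I would extend $\I$ by setting $P^\I \defequal \{0^2, 1^2, \ldots, M^2\}$ for any $M \geq y^\I + 1$. Then $x^\I = (y^\I)^2 \in P^\I$ gives $\I \models P(x)$, and $x^\I + 2y^\I + 1 = (y^\I + 1)^2 \in P^\I$ gives $\I \models P(x + 2y + 1)$. The universal conjunct holds because any $z$ with $x^\I < z < x^\I + 2y^\I + 1$ lies strictly between the two consecutive perfect squares $(y^\I)^2$ and $(y^\I + 1)^2$, hence is not itself a perfect square and so not in $P^\I$.

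For the ``$\Leftarrow$'' direction, suppose $P$ is interpreted as $\{0^2, 1^2, \ldots, N^2\}$ and $\I \models \theta_{x=y^2}$. From $\I \models P(x)$ and $\I \models P(x + 2y + 1)$ I obtain indices $i, j \in \{0, \ldots, N\}$ with $x^\I = i^2$ and $x^\I + 2y^\I + 1 = j^2$; since $2y^\I + 1 \geq 1$ we have $i < j$. The universal conjunct of $\theta_{x=y^2}$ forbids any element of $P^\I$ strictly between $i^2$ and $j^2$, forcing $j = i+1$. Expanding $(i+1)^2 = i^2 + 2i + 1$ and subtracting $x^\I = i^2$ yields $2y^\I + 1 = 2i + 1$, hence $y^\I = i$ and $x^\I = (y^\I)^2$, as required. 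The only subtlety — and the main thing to check carefully — is the boundary behavior when $y^\I = 0$ (so that $x^\I + 2y^\I + 1 = x^\I + 1$) and the need to ensure $M$ (resp.\ $N$) is chosen large enough that both witnesses $(y^\I)^2$ and $(y^\I+1)^2$ lie within $P^\I$; both are straightforward once we allow $P^\I$ to be arbitrarily long.
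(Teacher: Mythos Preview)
Your proposal is correct and follows essentially the same approach as the paper's proof: both directions exploit the identity $(y+1)^2 = y^2 + 2y + 1$ to recognize $x^\I$ and $x^\I + 2y^\I + 1$ as consecutive perfect squares, and the backward direction uses the gap condition in $\theta_{x=y^2}$ to force adjacency. Your write-up is in fact slightly more detailed than the paper's (you explicitly name the indices $i,j$ and argue $j=i+1$), but the argument is the same.
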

\proof{
  ``$\Rightarrow$'' If $\I \models x=y^2$ then we have
  $(y^\I+1)^2 = x^\I + 2y^\I + 1$. Let $P^\I$ be the set
  $\{0,1,\ldots,(y^\I+1)^2\}$. Clearly $x^\I, x^\I + 2y^\I + 1 \in
  P^\I$ and, since they are consecutive perfect squares, every number
  in between $x^\I$ and $x^\I + 2y^\I + 1$ does not belong to
  $P^\I$. Thus $\I \models \theta_{x=y^2}$. ``$\Leftarrow$'' If $\I
  \models \theta_{x=y^2}$ and $P^\I$ is a set of consecutive perfect
  squares, it follows that $x^\I$ and $x^\I+2y^\I+1$ are consecutive
  perfect squares, i.e.\ $x^\I=n^2$ and $x^\I+2y^\I+1=(n+1)^2$ for
  some $n \in \nat$. Then $y^\I=n$, thus $\I \models x=y^2$. \qed
}

Let $\Phi_{R(x_1,\ldots,x_n)=0}$ be the conjunction of \ref{eq:sqr}
with the formula obtained by replacing each atomic proposition $x=y^2$
with $\theta_{x=y^2}$ in $\Psi_{R(x_1,\ldots,x_n)=0}$. Observe that
each universally quantified variable in $\Phi_{R(x_1,\ldots,x_n)=0}$
occurs either in \ref{eq:sqr} or in some $\theta_{x=y^2}$, and
moreover, each $\theta_{x=y^2}$ belongs to the $\exists^*\forall^*$
fragment of $\lia$. $\Phi_{R(x_1,\ldots,x_n)=0}$ belongs thus to the
$\exists^*\forall^* \cap \forall^*\exists^*$ fragment of $\lia$, with
$P$ being the only monadic predicate symbol. Finally, we prove that
$R(x_1,\ldots,x_n)=0$ has a solution in $\nat^n$ iff
$\Phi_{R(x_1,\ldots,x_n)=0}$ is satisfiable.

``$\Rightarrow$'' Let $\I$ be a valuation mapping $x_1, \ldots, x_n$
into $\nat$, such that $\I \models R(x_1,\ldots,x_n)=0$. Obviously,
$\I$ can be extended to a model of $\Psi_{R(x_1,\ldots,x_n)=0}$ by
assigning $t_x^\I = (x^\I)^2$ for all auxiliary variables $t_x$
occurring in $\Psi_{R(x_1,\ldots,x_n)=0}$. We extend $\I$ to a model
of $\Phi_{R(x_1,\ldots,x_n)=0}$ by assigning $P^\I = \{n^2 \mid 0 \leq
n \leq \sqrt{m}\}$, where $m = \max\{(x^\I+1)^2 \mid x \in
\fv{\Psi_{R(x_1,\ldots,x_n)=0}}\}$. Clearly $P^\I$ meets the
requirements of \ref{eq:sqr}. By Fact \ref{fact:theta}, we obtain that
$\I \models \theta_{x=y^2}$ for each subformula $\theta_{x=y^2}$ of
$\Phi_{R(x_1,\ldots,x_n)=0}$, thus $\I \models
\Phi_{R(x_1,\ldots,x_n)=0}$.

``$\Leftarrow$'' If $\I \models \Psi_{R(x_1,\ldots,x_n)=0}$ then, by
\ref{eq:sqr}, $P^\I$ is a set of consecutive perfect squares, and,
by Fact \ref{fact:theta}, $\I \models x=y^2$ for each subformula
$\theta_{x=y^2}$ of $\Phi_{R(x_1,\ldots,x_n)=0}$. Then $\I \models
\Psi_{R(x_1,\ldots,x_n)=0}$ and consequently $\I \models
R(x_1,\ldots,x_n)=0$. \hfill\qed}

We consider now the satisfiability problem for the fragment
$\exists^*\forall^*\seplog(\lia)_{\Int,\Int}$ where both $\locs$ and
$\data$ are taken to be the $\Int$ sort, equipped with addition and
total order. Observe that, in this case, the heap consists of a set of
lists, possibly with aliases and circularities. Without losing
generality, we consider that $\Int$ is interpreted as the set of
positive integers\footnote{Extending the interpretation of $\locs$ to
  include negative integers does not make any difference for the
  undecidability result.}.

The above theorem cannot be directly used for the undecidability of
$\exists^*\forall^*\seplog(\lia)_{\Int,\Int}$, by interpreting the
(unique) monadic predicate as the (finite) domain of the heap. The
problem is with the \ref{eq:sqr} formula, that defines the
interpretation of the monadic predicate as a set of consecutive
perfect squares $0,1,\ldots,n^2$, and whose quantifier prefix lies in
the $\forall^*\exists^*$ fragment. We overcome this problem by
replacing the \ref{eq:sqr} formula above with a definition of such
sets in $\exists^*\forall^*\seplog(\lia)_{\Int,\Int}$. Let us first
consider the following properties expressed in $\seplog$
\cite{BrocheninDemriLozes11}: \[\begin{array}{rcl}
\sharp\xterm\geq1 & \equiv & \exists\uterm ~.~ \uterm \mapsto \xterm * \top \\ 
\sharp\xterm\leq1 & \equiv & \forall\uterm\forall\tterm ~.~ \neg(\uterm \mapsto \xterm * \tterm \mapsto \xterm * \top)
\end{array}\]
Intuitively, $\sharp\xterm\geq1$ states that $\xterm$ has at least one
predecessor in the heap, whereas $\sharp\xterm\leq1$ states that
$\xterm$ has at most one predecessor. We use $\sharp\xterm=0$ and
$\sharp\xterm=1$ as shorthands for $\neg(\sharp\xterm\geq1)$ and
$\sharp\xterm\geq1 \wedge \sharp\xterm\leq1$, respectively. The
formula below states that the heap can be decomposed into a list
segment starting with $\xterm$ and ending in $\yterm$, and several
disjoint cyclic lists:
\[\begin{array}{rcl}
\xterm \arrow{\circlearrowleft}{}^+ \yterm & \equiv & \sharp\xterm=0 
\wedge \mathsf{alloc}(\xterm) \wedge \sharp\yterm=1 \wedge
\neg\mathsf{alloc}(\yterm) ~\wedge \\ 
&& \forall \zterm ~.~ \zterm \not\teq \yterm \Rightarrow 
(\sharp\zterm=1 \Rightarrow \mathsf{alloc}(\zterm)) 
\wedge \forall \zterm ~.~ \sharp\zterm \leq 1
\end{array}\]
We forbid the existence of circular lists by adding the following
arithmetic constraint:
\[\forall \uterm \forall \tterm ~.~ \uterm \mapsto \tterm * \top \Rightarrow \uterm < \tterm
\tag{$\mathsf{nocyc}$}\label{eq:no-cycles}\] We ask, moreover, that the elements
of the list segment starting in $\xterm$ are consecutive perfect
squares:
\[\mathsf{consqr}(\xterm) \equiv \xterm=0 \wedge \xterm \mapsto 1 * \top \wedge 
\forall \zterm \forall \uterm \forall \tterm ~.~ \zterm \mapsto \uterm
* \uterm \mapsto \tterm * \top \Rightarrow
\tterm-\uterm=\uterm-\zterm+2
\tag{$\mathsf{consqr}$}\label{eq:consecutive-squares}\] Observe that
the formula $\exists\xterm\exists\yterm ~.~ \xterm
\arrow{\circlearrowleft}{}^+ \yterm \wedge \mathsf{nocyc} \wedge
\mathsf{consqr}(\xterm)$ belongs to
$\exists^*\forall^*\seplog(\lia)_{\Int,\Int}$.

\begin{theorem}\label{thm:il-undec}
  The satisfiability problem for
  $\exists^*\forall^*\seplog(\lia)_{\Int,\Int}$ is undecidable.
\end{theorem}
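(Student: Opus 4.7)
The plan is to adapt the reduction from Hilbert's 10th problem used in Theorem~\ref{thm:bsr-presburger}, replacing the monadic predicate $P$ and its $\forall^*\exists^*$ axiomatization (\ref{eq:sqr}) --- the only ingredient that prevented that proof from directly landing in $\exists^*\forall^*$ --- by an encoding of a finite set of consecutive perfect squares into the heap, using the $\seplog$ macros $\mathsf{alloc}$, $\xterm \arrow{\circlearrowleft}{}^+ \yterm$, \ref{eq:no-cycles} and \ref{eq:consecutive-squares} introduced just above.

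Starting from a polynomial equation $R(\vec{x})=0$, I would build the same quadratic system $\Psi_{R(\vec{x})=0}$ over $\nat$ as in Theorem~\ref{thm:bsr-presburger}, and then replace every atom $x=y^2$ by $\theta'_{x=y^2} \equiv \mathsf{alloc}(x) \wedge \mathsf{alloc}(x+2y+1) \wedge \forall z ~.~ x < z < x+2y+1 \Rightarrow \neg\mathsf{alloc}(z)$, which is the original $\theta_{x=y^2}$ with ``$x \in P$'' replaced by ``$x$ is allocated''. The full sentence would be the existential closure, over all variables of $\Psi$ together with two fresh location variables $\xterm, \yterm$, of $(\xterm \arrow{\circlearrowleft}{}^+ \yterm) \wedge \mathsf{nocyc} \wedge \mathsf{consqr}(\xterm) \wedge \widetilde{\Psi}_{R(\vec{x})=0}$, where $\widetilde{\Psi}$ denotes $\Psi$ with every $x=y^2$ replaced by $\theta'_{x=y^2}$.

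The key claim, analogous to Fact~\ref{fact:theta}, is that $(\xterm \arrow{\circlearrowleft}{}^+ \yterm) \wedge \mathsf{nocyc} \wedge \mathsf{consqr}(\xterm)$ forces the heap domain to be exactly an initial segment of perfect squares $\set{0, 1, 4, 9, \ldots, n^2}$ for some $n$: the cyclic-list garbage allowed by $\xterm \arrow{\circlearrowleft}{}^+ \yterm$ is eliminated by \ref{eq:no-cycles}, leaving a single acyclic list segment from $\xterm=0$ to $\yterm$; then \ref{eq:consecutive-squares} forces the list to begin with $0 \mapsto 1$ and every three successive values $z \to u \to t$ to satisfy $t-u = (u-z)+2$, which by an immediate induction identifies them with consecutive squares. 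Granted this, the forward direction takes any Diophantine solution $\I$ and extends it by allocating a list whose terminator $\yterm^\I$ is chosen beyond $(y^\I+1)^2$ for every relevant $y$, so that all allocation witnesses $x^\I = (y^\I)^2$ and $x^\I+2y^\I+1 = (y^\I+1)^2$ lie in $\dom(h)$; conversely, any satisfying model forces $x$ and $x+2y+1$ to be consecutive squares via $\theta'_{x=y^2}$, hence $x = y^2$, exactly as in Fact~\ref{fact:theta}.

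The main obstacle will be the quantifier bookkeeping. One must check that each of $\mathsf{alloc}$, $\sharp\xterm=0$, $\sharp\xterm\leq1$, $\sharp\xterm=1$, \ref{eq:no-cycles}, \ref{eq:consecutive-squares} and $\xterm \arrow{\circlearrowleft}{}^+ \yterm$ prenex-normalizes to a purely universal first-order prefix: the magic wand in $\mathsf{alloc}$ is first-order quantifier-free, and the lone existential in $\sharp\xterm\geq1$ occurs only on the left of the implication inside $\sharp\zterm=1 \Rightarrow \mathsf{alloc}(\zterm)$ within the definition of $\xterm \arrow{\circlearrowleft}{}^+ \yterm$, where it flips to universal. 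Once this is verified, pulling the outermost existentials (over the Diophantine unknowns, the auxiliary squaring witnesses $t_y$, and $\xterm, \yterm$) to the front places the whole sentence inside $\exists^*\forall^*\seplog(\lia)_{\Int,\Int}$, so undecidability of Hilbert's 10th problem transfers to this fragment through the reduction.
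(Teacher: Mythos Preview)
Your proposal is correct and follows essentially the same route as the paper: reuse the reduction of Theorem~\ref{thm:bsr-presburger}, replacing the $\forall^*\exists^*$ axiom (\ref{eq:sqr}) by $\exists\xterm\exists\yterm ~.~ \xterm \arrow{\circlearrowleft}{}^+ \yterm \wedge \mathsf{nocyc} \wedge \mathsf{consqr}(\xterm)$ and substituting $\mathsf{alloc}(\cdot)$ for $P(\cdot)$ in $\theta_{x=y^2}$. One small point on your quantifier bookkeeping: the existential in $\sharp\yterm\geq1$ (inside the positive conjunct $\sharp\yterm=1$) also needs to be pulled to the outer $\exists$-block, and the universal $\sharp\zterm\leq1$ on the left of the inner implication would flip to an existential under $\forall\zterm$, but this is harmless because the separate conjunct $\forall\zterm~.~\sharp\zterm\leq1$ lets you drop $\sharp\zterm\leq1$ from the antecedent of $\sharp\zterm=1 \Rightarrow \mathsf{alloc}(\zterm)$ without changing the meaning.
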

\proof{We use the same reduction as in the proof of Theorem
  \ref{thm:bsr-presburger}, with two differences: \begin{compactitem}
  \item we replace \ref{eq:sqr} by $\exists\xterm\exists\yterm ~.~
    \xterm \arrow{\circlearrowleft}{}^+ \yterm \wedge \mathsf{nocyc}
    \wedge \mathsf{consqr}(\xterm)$, and
  \item define $\theta_{x=y^2} \equiv \mathsf{alloc}(x) \wedge
    \mathsf{alloc}(x+2y+1) \wedge \forall z ~.~ x<z<x+2y+1 \Rightarrow
    \neg \mathsf{alloc}(z)$.\hfill\qed
  \end{compactitem}
}

It is tempting, at this point to ask whether interpreting locations as
integers and considering subsets of $\lia$ instead may help recover
the decidability. For instance, it has been found that the
Bernays-Sch\"onfinkel-Ramsey class is decidable in presence of
integers with difference bounds arithmetic \cite{WeidenbachVoigt15},
and the same type of question can be asked about the fragment of
$\exists^*\forall^*\seplog(\lia)_{\Int,\Int}$, with difference bounds
constraints only. 

Finally, we consider a variant of the previous undecidability result,
in which locations are the (uninterpreted) sort $U$ of $\euf$ and the
data consists of tuples of sort $U \times \Int$. This fragment of
$\seplog$ can be used to reason about lists with integer data. The
undecidability of this fragment can be proved along the same lines as
Theorem \ref{thm:il-undec}.

\begin{theorem}\label{thm:ulid-undec}
  The satisfiability problem for
  $\exists^*\forall^*\seplog(\euflia)_{U,U\times\Int}$ is undecidable.
\end{theorem}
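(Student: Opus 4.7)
The plan is to replay the reduction from Hilbert's tenth problem used in the proof of Theorem \ref{thm:il-undec}, but to shift every arithmetic quantity from the (now uninterpreted) location sort $U$ to the integer component of the data sort $U \times \Int$. Concretely, each points-to atom has shape $\xterm \mapsto (\yterm, n)$, with $\yterm \in U$ playing the role of ``next pointer'' in the list, and $n \in \Int$ carrying the arithmetic payload that was stored directly in the location in the previous proof.

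The structural predicates $\mathsf{alloc}(\xterm)$, $\sharp\xterm = 0$, $\sharp\xterm = 1$ and the list-with-cycles decomposition $\xterm \arrow{\circlearrowleft}{}^+ \yterm$ transfer verbatim, since they only talk about heap topology and equality on $U$, both available in $\euf$. The ordering-based acyclicity axiom \ref{eq:no-cycles}, which previously read $\forall \uterm \forall \tterm ~.~ \uterm \mapsto \tterm * \top \Rightarrow \uterm < \tterm$, must be rewritten because locations no longer carry an order; I replace it by a strict-monotonicity requirement on the integer payload along every edge, namely $\forall \uterm \forall \tterm \forall m \forall n ~.~ \uterm \mapsto (\tterm, m) * \tterm \mapsto (\_, n) * \top \Rightarrow m < n$. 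This still forbids cycles, since the payloads along a cycle would form a strictly increasing cyclic sequence of integers, which is impossible. Note that the quantifier prefix remains purely universal.

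The consecutive-squares constraint $\mathsf{consqr}(\xterm)$ is rephrased on the integer payload along the list: the cell at $\xterm$ carries $0$, its successor carries $1$, and for any three consecutive cells $\zterm \mapsto (\uterm, a) * \uterm \mapsto (\tterm, b) * \tterm \mapsto (\_, c) * \top$ we impose $c - b = b - a + 2$. Correspondingly, the squaring encoding $\theta_{x = y^2}$ becomes ``there exist allocated cells carrying integer payloads $x$ and $x + 2y + 1$, and no allocated cell carries a payload strictly between them'', that is $\exists \uterm \exists \tterm ~.~ \uterm \mapsto (\_, x) * \top \wedge \tterm \mapsto (\_, x + 2y + 1) * \top \wedge \forall \wterm \forall z ~.~ \wterm \mapsto (\_, z) * \top \Rightarrow z \leq x \vee z \geq x + 2y + 1$.

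The main point to verify, which I expect to be the only delicate step, is that the conjunction of the modified formulas still lies in $\exists^*\forall^*\seplog(\euflia)_{U, U \times \Int}$: all existential witnesses (the two endpoints of the list, together with the cell witnesses of each $\theta_{x = y^2}$) can be prenexed to the front, after which only universal quantifiers remain inside the matrix. The analogue of Fact \ref{fact:theta} then holds with ``allocated integer location $n$'' replaced by ``some allocated cell carries integer payload $n$'', and the equivalence between satisfiability of the constructed formula and solvability of $R(x_1, \ldots, x_n) = 0$ in $\nat^n$ goes through exactly as in Theorem \ref{thm:il-undec}.
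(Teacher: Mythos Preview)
Your proposal is correct and follows essentially the same route as the paper: the paper's proof also replays the reduction of Theorem~\ref{thm:il-undec}, redefining $\sharp\xterm\!\geq\!1$, $\sharp\xterm\!\leq\!1$, $\mathsf{nocyc}$ and $\mathsf{consqr}$ so that the arithmetic content moves from the location sort to the $\Int$ component of the data pair, with $\mathsf{nocyc}$ enforcing strict monotonicity of the integer payload along consecutive heap edges exactly as you do. The only cosmetic difference is that the paper writes the data tuples as $(d,\tterm)$ with the integer first, and does not spell out the adapted $\theta_{x=y^2}$ explicitly; your version makes this last step explicit but is otherwise the same argument. One small imprecision: the auxiliary predicates $\sharp\xterm\!\geq\!1$, $\sharp\xterm\!\leq\!1$ do not transfer \emph{verbatim}, since their points-to atoms must now carry an (existentially, respectively universally, quantified) integer component---the paper records precisely these rewritten shorthands.
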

\proof{
  Along the same lines as the proof of Theorem \ref{thm:il-undec}, with the following shorthands: 
  \[\begin{array}{rcl}
  \sharp\xterm\geq1 & \equiv & \exists\uterm^U \exists d^\Int ~.~ \uterm \mapsto (d,\xterm) * \top \\ 
  \sharp\xterm\leq1 & \equiv & \forall\uterm^U \forall\tterm^U \forall d^\Int ~.~ 
  \neg(\uterm \mapsto (d,\xterm) * \tterm \mapsto (d,\xterm) * \top) \\
  \\
  \mathsf{nocyc} & \equiv & \forall \uterm^U \forall \tterm^U \forall \vterm^U \forall d^\Int \forall e^\Int ~.~ 
  \uterm \mapsto (d,\tterm) * \tterm \mapsto (e,\vterm) * \top \Rightarrow d < e \\
  \\
  \mathsf{consqr}(\xterm) & \equiv & \exists \yterm^U \exists \zterm^U ~.~ \xterm\mapsto(0,\yterm) * \yterm\mapsto(1,\zterm) * \top \wedge \\
  && \forall \uterm^U \forall \tterm^U \forall \vterm^U \forall \wterm^U \forall d^\Int \forall e^\Int \forall f^\Int ~.~ 
  \uterm\mapsto(d,\tterm)*\tterm\mapsto(e,\vterm)*\vterm\mapsto(f,\wterm)*\top \Rightarrow \\
  && f-e=e-d+2 \hfill\text{\qed}
  \end{array}\]
}

\section{A Procedure for $\exists^*\forall^*$ Separation Logic in an SMT Solver}

This section presents a procedure for 
the satisfiability of $\exists^*\forall^*\seplog(\euf)_{U,U^k}$ inputs~\footnote{The procedure is incorporated into the master branch of the SMT solver CVC4 (\url{https://github.com/CVC4}), 
and can be enabled by command line parameter \texttt{-}\texttt{-quant-epr}.}.
Our procedure builds upon our previous work~\cite{ReynoldsIosifKingSerban16},
which gave a decision procedure for quantifier-free $\seplog(T)_{\locs,\data}$ inputs for theories $T$ where
the satisfiability problem for quantifier-free $T$-constraints is decidable.
Like existing approaches for quantified formulas in SMT~\cite{GeDeM-CAV-09,ReynoldsDKBT15Cav}, our approach is based on 
incremental quantifier instantiation based on a stream of candidate models returned by a solver for quantifier-free inputs.
Our approach for this fragment exploits the small model property given
in Lemma~\ref{lemma:small-model-property} to restrict the set of quantifier instantiations it considers
to a finite set.

\begin{figure}[t]
\begin{framed}
$\funcsolve( \exists \vec x\, \forall \vec y\, \varphi( \vec x, \vec y ) )$ where $\vec x = ( x_1, \ldots, x_m )$ and $\vec y = ( y_1, \ldots, y_n )$:
\begin{enumerate}
\item[\ ] Let $\vec k = ( k_1, \ldots, k_m )$ and $\vec e = ( e_1, \ldots, e_n )$ be fresh constants of the same type as $\vec x$ and $\vec y$.
\item[\ ] Let $L = L' \cup \{ k_1, \ldots, k_m \}$ where $L'$ is a set of fresh constants s.t. $\card{L'} = \len{\varphi( \vec x, \vec y )} + n$.
\item[\ ] Return $\funcsmtsolve( \exists \vec x\, \forall \vec y\, \varphi( \vec x, \vec y ), \emptyset, L )$.
\end{enumerate}
$\funcsmtsolve( \exists \vec x\, \forall \vec y\, \varphi( \vec x, \vec y ), \Gamma, L )$:
\begin{enumerate}
\item If $\Gamma$ is $(\seplog,\euf)$-unsat, return ``unsat".
\item Assume $\exists \vec x\, \forall \vec y\, \varphi( \vec x, \vec y )$ is equivalent to 
$\exists \vec x\, \forall \vec y\, \varphi_1( \vec x, \vec y ) \wedge \ldots \wedge \forall \vec y\, \varphi_p( \vec x, \vec y )$.
\item[\ ] If $\Gamma'_j = \Gamma \cup \{ \neg \varphi_j( \vec k, \vec e ) \wedge \displaystyle\bigwedge^n_{i=1} \bigvee_{t \in L} e_i \teq t \}$ is $(\seplog,\euf)$-unsat for all $j=1,\ldots,p$, return ``sat".
\item Otherwise, let $\I,h \models_{\tinyseplog} \Gamma'_j$ for some $j \in \{ 1,\ldots,p\}$.
\item[\ ] Let $\vec t = ( t_1, \ldots, t_n )$ be such that $e_i^\I = t_i^\I$ and $t_i \in L$ for each $i=1,\ldots,n$.
\item[\ ] Return $\funcsmtsolve( \exists \vec x\, \forall \vec y\, \varphi( \vec x, \vec y ), \Gamma \cup \{ \varphi_j( \vec k, \vec t ) \}, L )$.
\end{enumerate}
\end{framed}
\vspace{-2ex}
\caption{A counterexample-guided procedure for $\exists^*\forall^*\seplog(\euf)_{U,U^k}$ formulas $\exists \vec x\, \forall \vec y\, \varphi( \vec x, \vec y )$,
where $U$ is an uninterpreted sort in the signature of $\euf$.
\label{fig:smt-inst-sept}}
\end{figure}

Figure~\ref{fig:smt-inst-sept} gives a counterexample-guided approach for establishing the satisfiability of input $\exists \vec x\, \forall \vec y\, \varphi( \vec x, \vec y )$.
We first introduce tuples of fresh constants $\vec k$ and $\vec e$ of the same type as $\vec x$ and $\vec y$ respectively.
Our procedure will be based on finding a set of instantiations of $\forall \vec y\, \varphi( \vec k, \vec y )$ that are either collectively
unsatisfiable or are satisfiable and entail our input.
Then, we construct a set $L$ which is the union of constants $\vec k$ and a set $L'$ of fresh constants whose cardinality is equal to $\len{\varphi( \vec x, \vec y )}$ (see Section 3.1)
plus the number of universal variables $n$ in our input.
Conceptually, $L$ is a finite set of terms from which the instantiations of $\vec y$ in $\forall \vec y\, \varphi( \vec k, \vec y )$ can be built.

After constructing $L$, we call the recursive subprocedure $\funcsmtsolve$ on $\Gamma$ (initially empty) and $L$.
This procedure incrementally adds instances of $\forall \vec y\, \varphi( \vec k, \vec y )$ to $\Gamma$.
In step 1, we first check if $\Gamma$ is $(\seplog,T)$-unsatisfiable using the procedure from~\cite{ReynoldsIosifKingSerban16}.
If so, our input is $(\seplog,T)$-unsatisfiable.
Otherwise, in step 2 we consider the \emph{miniscoped} form of our input
$\exists \vec x\, \forall \vec y\, \varphi_1( \vec x, \vec y ) \wedge \ldots \wedge \forall \vec y\, \varphi_p( \vec x, \vec y )$,
that is, where quantification over $\vec x$ is distributed over conjunctions.
In the following, we may omit quantification on conjunctions $\varphi_j$ that do not contain variables from $\vec y$.
Given this formula, for each $j=1,\ldots,p$,
we check the $(\seplog,T)$-satisfiability of set $\Gamma'_j$ containing $\Gamma$, 
the negation of $\forall \vec y\, \varphi_j( \vec k, \vec y )$ where $\vec y$ is replaced by fresh contants $\vec e$, 
and a conjunction of constraints that says
each $e_i$ must be equal to at least one term in $L$ for $i=1,\ldots,n$.
If $\Gamma'_j$ is $(\seplog,T)$-unsatisfiable for each $j=1,\ldots,p$, our input is $(\seplog,T)$-satisfiable.
Otherwise in step 3, given an interpretation $\I$ and heap $h$ satisfying $\Gamma'_j$, we construct a tuple of terms $\vec t = ( t_1, \ldots, t_n )$
used for instantiating $\forall \vec y\, \varphi_j( \vec k, \vec y )$.
For each $i=1,\ldots,n$, we choose $t_i$ to be a term from $L$ whose interpretation is the same as $e_i$.
The existence of such a $t_i$ is guaranteed by the fact that $\I$ satisfies the constraint from $\Gamma'_j$ that tells us $e_i$ is equal to at least one such term.
This selection ensures that instantiations on each iteration
are chosen from a finite set of possibilities and are unique.
In practice, the procedure terminates, both for unsatisfiable and satisfiable inputs,
before considering all $\vec t$ from $L^n$ for each $\forall \vec y\, \varphi_j( \vec x, \vec y )$.

\begin{theorem}
\label{thm:solveslt}
Let $U$ be an uninterpreted sort belonging to the signature of $\euf$.
For all $\exists^*\forall^*\seplog(\euf)_{U,U^k}$ formulae $\psi$ of the form $\exists \vec x\, \forall \vec y\, \varphi( \vec x, \vec y )$,
$\funcsolve( \psi )$:
\begin{enumerate}
\item \label{it:sltunsat} Answers ``unsat" only if $\psi$ is $(\seplog,\euf)$-unsatisfiable.
\item \label{it:sltsat} Answers ``sat" only if $\psi$ is $(\seplog,\euf)$-satisfiable.
\item \label{it:sltterm} Terminates.
\end{enumerate}
\end{theorem}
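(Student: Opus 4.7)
The plan is to handle the three parts separately: soundness under ``unsat'' by an invariant on $\Gamma$, soundness under ``sat'' by invoking the small model property of Lemma~\ref{lemma:small-model-property}, and termination by bounding the number of distinct instances that can be added to $\Gamma$.

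For part~\ref{it:sltunsat}, I would prove by induction on the recursion depth that every formula in $\Gamma$ is a logical consequence of the Skolemization $\forall\vec y\,\varphi(\vec k, \vec y)$ of $\psi$. Initially $\Gamma = \emptyset$, and at each recursive step we add a ground instance $\varphi_j(\vec k, \vec t)$ of the conjunct $\forall\vec y\,\varphi_j(\vec k, \vec y)$ from the miniscoped Skolemization. Hence if $\Gamma$ is $(\seplog,\euf)$-unsat at step~1, so is $\psi$.

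For part~\ref{it:sltsat}, assume every $\Gamma'_j$ is $(\seplog,\euf)$-unsatisfiable. Unfolding the definition, this means that every model of $\Gamma$ satisfies $\varphi_j(\vec k, \vec e)$ whenever each $e_i$ is interpreted as the value of some term in $L$, for all $j$. To upgrade this ``$\vec y$ ranges over~$L$'' guarantee into a genuine $\forall\vec y\,\varphi(\vec k, \vec y)$, I would invoke Lemma~\ref{lemma:small-model-property}, whose cut-off $\len{\varphi} + m + n$ matches $|L|$ precisely by construction of $L$ in $\funcsolve$. Concretely, starting from a model of $\Gamma$ furnished by step~1 not returning ``unsat'', I would use the pruning of Lemma~\ref{lemma:small-model-existence}, taking the privileged location set to be (after possibly adjusting the interpretation) the values of the fresh $L'$ constants, to produce a reduced heap $h'$ in which only locations from $\I(L) \cup \{\nil^\I, v\}$ remain, where $v$ is an auxiliary ``dummy''. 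Lemma~\ref{lemma:heap-equivalence} then ensures that the instances in $\Gamma$ remain satisfied in $(\I, h')$, while for every tuple $\vec y$ of such locations the previous instantiated guarantee yields $\varphi_j(\vec k, \vec y)$, so $(\I, h') \models_{\tinyseplog} \forall\vec y\,\varphi(\vec k, \vec y)$ and $\psi$ is satisfiable. The main technical hurdle is aligning the privileged location set with the interpretations of $L'$, handling $\nil$ and the dummy $v$, and verifying that tuples of invisible locations are indistinguishable from tuples in $\I(L)$ via the $\sim$-equivalence of Lemma~\ref{lemma:heap-equivalence}.

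For part~\ref{it:sltterm}, I would show that each recursive call adds a genuinely new instance to $\Gamma$. If the pair $(j, \vec t)$ selected in step~3 had already been used in a previous call, then $\varphi_j(\vec k, \vec t) \in \Gamma$, so any model $\I,h$ of $\Gamma'_j$ satisfies both $\varphi_j(\vec k, \vec t)$ and $\neg\varphi_j(\vec k, \vec e)$; since $e_i^\I = t_i^\I$ for each $i$ and the semantics of $\varphi_j$ depends on $\vec e$ only through its interpretation, this is a contradiction. Since the instances are drawn from the finite set $\{\varphi_j(\vec k, \vec t) \mid 1 \leq j \leq p,\, \vec t \in L^n\}$ of size at most $p\cdot|L|^n$, the recursion terminates.
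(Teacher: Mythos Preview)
Your proposal is correct and follows essentially the same approach as the paper: parts~\ref{it:sltunsat} and~\ref{it:sltterm} match exactly, and for part~\ref{it:sltsat} both arguments derive that every model of $\Gamma$ satisfies $\varphi(\vec k,\vec y)$ whenever $\vec y$ ranges over $L$, then use the small-model bound $\len{\varphi}+m+n=\card{L}$ to lift this to an unrestricted $\forall\vec y$. The only difference is packaging: the paper first observes (via freshness of $\vec e$) that $\forall\vec y\,(\varphi(\vec k,\vec y)\vee\neg A)$ is satisfiable and then appeals to Lemma~\ref{lemma:small-model-property} as a black box to conclude equisatisfiability with $\forall\vec y\,\varphi(\vec k,\vec y)$, whereas you unpack that step by explicitly invoking Lemmas~\ref{lemma:small-model-existence} and~\ref{lemma:heap-equivalence}; one small slip is that Lemma~\ref{lemma:heap-equivalence} should be used to transfer the instantiated guarantees $\varphi(\vec k,\vec y)$ from $h$ to $h'$, not to preserve $\Gamma$ itself.
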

\begin{proof}
To show (\ref{it:sltunsat}), note that $\Gamma$ contains only formulas of the form $\varphi_j( \vec k, \vec t )$,
which are consequences of our input.
Thus, when $\Gamma$ is $(\seplog,\euf)$-unsatisfiable, our input is $(\seplog,\euf)$-unsatisfiable as well.

To show (\ref{it:sltsat}), 
we have that $\Gamma$ is $(\seplog,\euf)$-satisfiable and
$\Gamma' = \Gamma \cup \{ \neg \varphi_j( \vec k, \vec e ) \wedge A \}$ is $(\seplog,\euf)$-unsatisfiable for each $j=1,\ldots,p$, where:
\[\begin{array}{c}
A = \displaystyle\bigwedge^n_{i=1} \bigvee_{t \in L} e_i \teq t
\end{array}\]
where $L = L' \cup \{ k_1, \ldots, k_m \}$ and $L'$ is a set of fresh constants s.t. $\card{L'} = \len{\varphi( \vec x, \vec y )} + n$.
In other words, we have that all models of $\Gamma$ satisfy $( \varphi_1( \vec k, \vec e ) \wedge \ldots \wedge \varphi_p( \vec k, \vec e ) ) \vee \neg A$,
which is equivalent to $\varphi( \vec k, \vec e )\vee \neg A$.
Since $\vec e$ is not contained in $\Gamma$, we have that all models of $\Gamma$ satisfy $\forall \vec y\, (\varphi( \vec k, \vec y ) \vee \neg A \{ \vec e \mapsto \vec y \})$.
Since $\Gamma$ is $(\seplog,\euf)$-satisfiable, we have that 
$\forall \vec y\, (\varphi( \vec k, \vec y ) \vee \neg A \{ \vec e \mapsto \vec y \})$ is $(\seplog,\euf)$-satisfiable as well.
Consider the formula $\forall \vec y\, \varphi( \vec k, \vec y )$.
By Lemma~\ref{lemma:small-model-property}, $\forall \vec y\, \varphi( \vec k, \vec y )$ has a model if and only if there exists an interpretation $\I$ and heap $h$
such that
$\I,h \models_{\tinyseplog} \forall \vec y\, \varphi( \vec k, \vec y )$ and
$\card{U^\I} \leq \len{\varphi} + \card{\fv{\forall \vec y\,\varphi( \vec x, \vec y )}} + n$.
Due to the construction of $L$, this implies that $\forall \vec y\, (\varphi( \vec k, \vec y ) \vee \neg A \{ \vec e \mapsto \vec y \})$ is $(\seplog,\euf)$-satisfiable
if and only if $\forall \vec y\, \varphi( \vec k, \vec y )$ is $(\seplog,\euf)$-satisfiable.
Thus, $\exists \vec x\, \forall \vec y\, \varphi( \vec x, \vec y )$ is $(\seplog,\euf)$-satisfiable.

To show (\ref{it:sltterm}), 
clearly only a finite number of possible formulas can be added to $\Gamma$ as a result of the procedure,
since all terms $\vec t$ belong to the finite set $L$ and $p$ is finite.
Furthermore, on every iteration, for any $j$, $\I$ satisfies $\Gamma$ and $\neg \varphi_j( \vec k, \vec e )$.
Since $e_i^\I = t_i^\I$ for each $i = 1, \ldots, n$, we have that $\varphi_j( \vec k, \vec t ) \not\in \Gamma$,
and thus a new formula is added to $\Gamma$ on every call.
Thus, only a finite number of recursive calls are made to $\funcsmtsolve$.
Since the $(\seplog,\euf)$-satisfiability of quantifier-free is decidable, all steps in the procedure are terminating,
and thus $\funcsolve$ terminates.
\qed
\end{proof}

We discuss a few important details regarding our implementation of the procedure.


\paragraph{Matching Heuristics} 
When constructing the terms $\vec t$ for instantiation,
it may be the case that $e_i^\I = u^\I$ for multiple $u \in L$ for some $i \in \{ 1,\ldots,n \}$.
In such cases, the procedure will choose one such $u$ for instantiation.
To increase the likelihood of the instantiation being relevant to the satisfiability of our input,
we use heuristics for selecting the best possible $u$ among those whose interpretation is equal to $e_i$ in $\I$.
In particular, if $e_i^\I = u_1^\I = u_2^\I$, and $\Gamma'$ contains predicates of the form
$e_i \mapsto v$ and $u_1 \mapsto v_1$ for some $v, v_1$ where $v^\I = v_1^\I$
but no predicate of the form $u_2 \mapsto v_2$ for some $v_2$ where $v^\I = v_2^\I$,
then we strictly prefer term $u_1$ over term $u_2$ when choosing term $t_i$ for $e_i$.

\paragraph{Finding Minimal Models}
Previous work~\cite{ReyEtAl-1-RR-13} developed efficient techniques for finding small models for uninterpreted sorts in CVC4.
We have found these techniques to be beneficial to the performance of the procedure in Figure~\ref{fig:smt-inst-sept}.
In particular, we use these techniques to find $\Sigma$-interpretations $\I$ in $\funcsmtsolve$ that
interpret $U$ as a finite set of minimal size.
When combined with the aforementioned matching heuristics,
these techniques lead to finding useful instantiations more quickly, since more terms are constrained to be equal to $e_i$ for $i=1,\ldots,n$
in interpretations $\I$.

\paragraph{Symmetry Breaking} 
The procedure in Figure~\ref{fig:smt-inst-sept} introduces a set of fresh constants $L$,
which in turn introduce the possibility of discovering $\Sigma$-interpretations $\I$ that are isomorphic, that is,
identical up to renaming of constants in $L'$.
Our procedure adds additional constraints to $\Gamma$ that do not affect its satisfiability,
but reduce the number of isomorphic models.
In particular, we consider an ordering $\prec$ on the constants from $L'$,
and add constraints that ensure that all models $(\I,h)$ of $\Gamma$
are such that if $\ell_1^\I \not\in \dom(h)$, then $\ell_2^\I \not\in \dom(h)$ for all $\ell_2$ such that $\ell_1 \prec \ell_2$.

\begin{example}
Say we wish to show the validity of the entailment $\xterm \neq \yterm
\wedge \xterm \mapsto \zterm \models_{\tinyseplog} \exists \uterm ~.~ \xterm \mapsto \uterm$, 
from the introductory example (Section
\ref{sec:intro}), where $\xterm,\yterm,\zterm,\uterm$ are of
sort $U$ of $\euf$. This entailment is valid iff the
$\exists^*\forall^*\seplog(\euf)_{U,U^k}$ formula $\exists \xterm
\exists \yterm \exists \zterm \forall \uterm ~.~ \xterm \not\teq
\yterm \wedge \xterm \mapsto \zterm \wedge \neg \xterm \mapsto \uterm$
is $(\seplog,\euf)$-unsatisfiable.  A run of the procedure in
Figure~\ref{fig:smt-inst-sept} on this input constructs tuples $\vec k
= ( k_x, k_y, k_z )$ and $\vec e = ( e_u )$, and set $L = \{ k_x, k_y,
k_z,\ell_1,\ell_2 \}$, noting that $\len{\xterm \not\teq \yterm \wedge
  \xterm \mapsto \zterm \wedge \neg \xterm \mapsto \uterm}=1$.  We
then call $\funcsmtsolve$ where $\Gamma$ is initially empty.  By
miniscoping, our input is equivalent to $\exists \xterm \exists \yterm
\exists \zterm ~.~ \xterm \not\teq \yterm \wedge \xterm \mapsto \zterm
\wedge \forall \uterm ~.~ \neg \xterm \mapsto \uterm$.
On the first two recursive calls to $\funcsmtsolve$, we may add $k_x
\not\teq k_y$ and $k_x \mapsto k_z$ to $\Gamma$ by trivial
instantiation of the first two conjuncts.  On the third recursive
call, $\Gamma$ is $(\seplog,\euf)$-satisfiable, and we check the
satisfiability of:
\[\begin{array}{c}
\Gamma' = \{ k_x \neq k_y, k_x \mapsto k_z, k_x \mapsto e_u \wedge ( e_u \teq k_x \vee e_u \teq k_y \vee e_u \teq k_z\vee e_u \teq \ell_1\vee e_u \teq \ell_2 ) \}
\end{array}\]
Since $k_x \mapsto k_z$ and $k_x \mapsto e_u$ are in $\Gamma'$, all 
$\Sigma$-interpretations $\I$ and heaps $h$ such that $\I,h \models_{\tinyseplog} \Gamma'$ are such that $e_u^\I =k_z^\I$.
Since $k_z \in L$, we may choose to add the instantiation 
$\neg k_x \mapsto k_z$ to $\Gamma$, after which $\Gamma$ is $(\seplog,\euf)$-unsatisfiable
on the next recursive call to $\funcsmtsolve$.
Thus, our input is $(\seplog,\euf)$-unsatisfiable and the entailment is valid. \hfill$\blacksquare$
\end{example}

A modified version of the procedure in Figure~\ref{fig:smt-inst-sept} can be used for
$\exists^*\forall^*\seplog(T)_{\locs,\data}$-satisfiability for theories $T$ beyond equality,
and where $\locs$ and $\data$ are not restricted to uninterpreted sorts.
Notice that in such cases, we cannot restrict $\Sigma$-interpretations $\I$ in $\funcsmtsolve$ to
interpret each $e_i$ as a member of finite set $L$, 
and hence we modify $\funcsmtsolve$ to omit the constraint restricting variables in $\vec e$ to be equal to a term from $L$ in the check in Step 2.
This modification results in a procedure that is sound both for ``unsat" and ``sat",
but is no longer terminating in general.
Nevertheless, it may be used as a heuristic for determining $\exists^*\forall^*\seplog(T)_{\locs,\data}$-(un)satisfiability.

\section{Experimental Evaluation}
We implemented the $\funcsolve$ procedure from
Figure~\ref{fig:smt-inst-sept} within the CVC4 SMT
solver\footnote{Available at \url{http://cvc4.cs.nyu.edu/web/}.}
(version 1.5 prerelease). This implementation was tested on two kinds
of benchmarks:\begin{inparaenum}[(i)] \item finite unfoldings of
  inductive predicates, mostly inspired by benchmarks used in the
  SL-COMP'14 solver competition \cite{sl-comp14}, and \item
  verification conditions automatically generated by applying the
  weakest precondition calculus of ~\cite{IshtiaqOHearn01} to the
  program loops in Figure \ref{fig:loops}. \end{inparaenum} All
experiments were run on a 2.80GHz Intel(R) Core(TM) i7 CPU machine
with 8MB of cache \footnote{The CVC4 binary and examples used in
  these experiments are available at \\
  {\url{http://cs.uiowa.edu/~ajreynol/VMCAI2017-seplog-epr}}.}.

\begin{figure}[htb]
\begin{center}
\begin{minipage}{7cm}
{\footnotesize\begin{tabbing}
1:  \texttt{wh}\=\texttt{ile} $\wterm \neq \nil$ \texttt{do} \\
2:  \> $\mathbf{assert}(\wterm.\datap = c_0)$ \\
3:  \> $\vterm := \wterm$; \\
4:  \> $\wterm := \wterm.\nextp$; \\
5:  \> $\mathbf{dispose}(\vterm)$;   \\
6:  \> \texttt{do} \\
\\
\> {\bf (z)disp}
\end{tabbing}}
\end{minipage}
\hspace*{2cm}
\begin{minipage}{6.5cm}
{\footnotesize\begin{tabbing}
1:  \texttt{wh}\=\texttt{ile} $\uterm \neq \nil$ \texttt{do} \\
2:  \> $\mathbf{assert}(\uterm.\datap = c_0)$ \\
3:  \> $\wterm := \uterm.\nextp$; \\
4:  \> $\uterm.\nextp := \vterm$; \\
5:  \> $\vterm := \uterm$;   \\
6:  \> $\uterm := \wterm$;   \\
7:  \> \texttt{do} \\
\> {\bf (z)rev}
\end{tabbing}}
\end{minipage}
\begin{tabular}{r@{ }c@{ }l@{\hskip 1.5cm}r@{ }c@{ }l}
$\mathsf{list}^0(x)$ & $\triangleq$ & $\emp \land x = \nil$ & $\mathsf{zlist}^0(x)$ & $\triangleq$ & $\emp \land x = \nil$\\
$\mathsf{list}^n(x)$ & $\triangleq$ & $\exists y \, . \, x \mapsto y * \mathsf{list}^{n-1}(y)$ & $\mathsf{zlist}^n(x)$ & $\triangleq$ & $\exists y \, . \, x \mapsto (c_0, y) * \mathsf{zlist}^{n-1}(y)$
\end{tabular}
\end{center}
\vspace*{-\baselineskip}
\caption{Program Loops}\label{fig:loops}
\vspace*{-\baselineskip}
\end{figure}

We compared our implementation with the results of applying the CVC4
decision procedure for the quantifier-free fragment of $\seplog$
\cite{ReynoldsIosifKingSerban16} to a variant of the benchmarks,
obtained by manual quantifier instantiation, as follows. Consider
checking the validity of the entailment $\exists \vec{x} ~.~
\phi(\vec{x}) \models_{\tinyseplog} \exists \vec{y} ~.~
\psi(\vec{y})$, which is equivalent to the unsatisfiability of the
formula \(\exists \vec{x} \forall \vec{y} ~.~ \phi(\vec{x}) \wedge
\neg \psi(\vec{y})\). We first check the satisfiability of $\phi$. If
$\phi$ is not satisfiable, the entailment holds trivially, so let us
assume that $\phi$ has a model. Second, we check the satisfiability of
$\phi \wedge \psi$. Again, if this is unsatisfiable, the entailment
cannot hold, because there exists a model of $\phi$ which is not a
model of $\psi$. Else, if $\phi \wedge \psi$ has a model, we add an
equality $x=y$ for each pair of variables $(x,y) \in
\vec{x}\times\vec{y}$ that are mapped to the same term in this model,
the result being a conjunction $E(\vec{x},\vec{y})$ of
equalities. Finally, we check the satisfiability of the formula $\phi
\wedge \neg\psi \wedge E$. If this formula is unsatisfiable, the
entailment is valid, otherwise, the check is inconclusive. The times
in Table \ref{tab:experiments} correspond to checking satisfiability
of \(\exists \vec{x} \forall \vec{y} ~.~ \phi(\vec{x}) \wedge \neg
\psi(\vec{y})\) using the $\funcsolve$ procedure (Figure
\ref{fig:smt-inst-sept}), compared to checking satisfiability of $\phi
\wedge \neg\psi \wedge E$, where $E$ is manually generated.


In the first set of experiments (Table \ref{tab:experiments}) we have
considered inductive predicates commonly used as verification
benchmarks \cite{sl-comp14}. Here we check the validity of the
entailment between $\mathsf{lhs}$ and $\mathsf{rhs}$, where both
predicates are unfolded $n=1,2,3,4,8$ times. The entailment between
$\mathsf{pos}_2^1$ and $\mathsf{neg}_4^1$ is skipped because it is not
valid (since the negated formula is satisfiable, we cannot generate
the manual instantiation). 

The second set of experiments considers the verification conditions of
the forms $\varphi \Rightarrow \mathsf{wp}(\mathbf{l},\phi)$ and
$\varphi \Rightarrow \mathsf{wp}^n(\mathbf{l},\phi)$, where
$\mathsf{wp}(\mathbf{l},\phi)$ denotes the weakest precondition of the
$\seplog$ formula $\phi$ with respect to the sequence of statements
$\mathbf{l}$, and $\mathsf{wp}^n(\mathbf{l},\phi) =
\mathsf{wp}(\mathbf{l},
\ldots\mathsf{wp}(\mathbf{l}, \\ \mathsf{wp}(\mathbf{l},\phi)) \ldots)$
denotes the iterative application of the weakest precondition $n$
times in a row. We consider the loops depicted in Figure
\ref{fig:loops}, where, for each loop {\bf l}, we consider the variant
    {\bf zl} as well, which tests that the data values contained
    within the memory cells are equal to a constant $c_0$ of sort
    $\locs$, by the assertions on line 2. The postconditions are
    specified by finite unfoldings of the inductive predicates
    $\mathsf{list}$ and $\mathsf{zlist}$.

We observed that, compared to checking the manual instantiation, 
the fully automated solver was less than $0.5$ seconds
slower on $72\%$ of the test cases, and less than $1$ second slower on $79\%$ 
of the test cases.
The automated solver experienced $3$ timeouts, where the manual
instantiation succeeds (for $\widehat{\mathsf{tree}}$ vs
$\mathsf{tree}$ with $n=8$, $\widehat{\mathsf{ts}}$ vs
$\mathsf{ts}$ with $n=3$, and $\mathsf{list}^{n}(u) * \mathsf{list}^{0}(v)$ 
vs $\mathsf{wp}^n(\mathbf{rev}, u = \nil \land \mathsf{list}^{n}(v))$
with $n=8$). These timeouts are caused by the first call to the
quantifier-free $\seplog$ decision procedure, which fails to produce a
model in less than $300$ seconds (time not accounted for in the
manually produced instance of the problem).

\begin{table}[htb]
\begin{tabular}{|c|c|c|c|c|c|c|c|}
\Xhline{2\arrayrulewidth}
{\bf lhs} & {\bf rhs} & & $n=1$ & $n=2$ & $n=3$ & $n=4$ & $n=8$\\
\Xhline{2\arrayrulewidth}
\multicolumn{8}{|c|}{\bf Unfoldings of inductive predicates}\\
\Xhline{2\arrayrulewidth}
$\scriptstyle \widehat{\mathsf{ls}}(x, y) \triangleq \emp \land x = y \lor$ & $\scriptstyle \mathsf{ls}(x, y) \triangleq \emp \land x = y \lor $ & {\scriptsize $\funcsolve$} & {\scriptsize $<0.01$s} & {\scriptsize 0.02s} & {\scriptsize 0.03s} & {\scriptsize 0.05s} & {\scriptsize 0.21s}\\[-1pt]
\cline{3-8}
$\scriptstyle  \exists z \, . \, x \neq y \land x \mapsto z * \widehat{\mathsf{ls}}(z, y)$ & $\scriptstyle \exists z \, . \, x \mapsto z * \mathsf{ls}(z, y)$ & {\scriptsize manual} & {\scriptsize $<0.01$s} & {\scriptsize $<0.01$s} & {\scriptsize $<0.01$s} & {\scriptsize $<0.01$s} & {\scriptsize $<0.01$s}\\[-1pt]
\Xhline{2\arrayrulewidth}
$\scriptstyle \widehat{\mathsf{tree}}(x) \triangleq \emp \land x = \mathsf{nil} \lor$ & $\scriptstyle \mathsf{tree}(x) \triangleq \emp \land x = \mathsf{nil} \lor$ & {\scriptsize $\funcsolve$} & {\scriptsize $<0.01$s} & {\scriptsize 0.04s} & {\scriptsize 1.43s} & {\scriptsize 23.42s} & {\scriptsize $>300$s}\\[-1pt]
\cline{3-8}
$\scriptstyle \exists l \exists r \, . \, l \neq r \land x \mapsto (l, r) * \mathsf{tree}(l) * \mathsf{tree}(r)$ & $\scriptstyle \exists l \exists r \, . \, x \mapsto (l, r) * \mathsf{tree}(l) * \mathsf{tree}(r)$ & {\scriptsize manual} & {\scriptsize $<0.01$s} & {\scriptsize $<0.01$s} & {\scriptsize $<0.01$s} & {\scriptsize $<0.01$s} & {\scriptsize 0.09s}\\[-1pt]
\Xhline{2\arrayrulewidth}
$\scriptstyle \widehat{\mathsf{ts}}(x, a) \triangleq \emp \land x = \nil \lor $ & $\scriptstyle\mathsf{ts}(x, a) \triangleq \emp \land x = \nil \lor$ & {\scriptsize $\funcsolve$} & {\scriptsize $<0.01$s} & {\scriptsize 0.81s} & {\scriptsize $>300$s} & {\scriptsize $>300$s} & {\scriptsize $>300$s}\\[-1pt]
\cline{3-8}
$\scriptstyle \exists l \exists r \, . \, x \neq y \land x \mapsto (l, r) * \widehat{\mathsf{ts}} (l, y) * \mathsf{tree}(r) \lor$ & $\scriptstyle \exists l \exists r \, . \, \land x \mapsto (l, r) * \mathsf{ts} (l, y) * \mathsf{tree}(r) \lor$ & {\scriptsize manual} & {\scriptsize $<0.01$s} & {\scriptsize 0.03s} & {\scriptsize 103.89s} & {\scriptsize $>300$s} & {\scriptsize $>300$s}\\[-1pt]
$\scriptstyle \exists l \exists r \, . \, x \neq y \land x \mapsto (l, r) * \mathsf{tree} (l) * \widehat{\mathsf{ts}}(r, y)$ & $\scriptstyle \exists l \exists r \, . \, \land x \mapsto (l, r) * \mathsf{tree} (l) * \mathsf{ts}(r, y)$ & & & & & &\\
\Xhline{2\arrayrulewidth}
$\scriptstyle \mathsf{pos_1}(x, a) \triangleq x \mapsto a \lor \exists y \exists b \, . \,$ & $\scriptstyle \mathsf{neg_1}(x, a) \triangleq \lnot x \mapsto a \lor \exists y \exists b \, . \,$ & {\scriptsize $\funcsolve$} & {\scriptsize 0.34s} & {\scriptsize 0.01s} & {\scriptsize 0.31s} & {\scriptsize 0.76s} & {\scriptsize 21.19s}\\[-1pt]
\cline{3-8}
$\scriptstyle x \mapsto a * \mathsf{pos_1}(y, b)$ & $\scriptstyle x \mapsto a * \mathsf{neg_1}(y, b)$ & {\scriptsize manual} & {\scriptsize 0.04s} & {\scriptsize 0.05s} & {\scriptsize 0.08s} & {\scriptsize 0.12s} & {\scriptsize 0.53s}\\[-1pt]
\Xhline{2\arrayrulewidth}
$\scriptstyle \mathsf{pos_1}(x, a) \triangleq x \mapsto a \lor \exists y \exists b \, . \,$ & $\scriptstyle \mathsf{neg_2}(x, a) \triangleq x \mapsto a \lor \exists y \exists b \, . \,$ & {\scriptsize $\funcsolve$} & {\scriptsize 0.03s} & {\scriptsize 0.12s} & {\scriptsize 0.23s} & {\scriptsize 0.46s} & {\scriptsize 3.60s}\\[-1pt]
\cline{3-8}
$\scriptstyle x \mapsto a * \mathsf{pos_1}(y, b)$ & $\scriptstyle \lnot x \mapsto a * \mathsf{neg_2}(y, b)$ & {\scriptsize manual} & {\scriptsize 0.05s} & {\scriptsize 0.08s} & {\scriptsize 0.08s} & {\scriptsize 0.12s} & {\scriptsize 0.54s}\\[-1pt]
\Xhline{2\arrayrulewidth}
$\scriptstyle \mathsf{pos_2}(x, a) \triangleq x \mapsto a \lor \exists y \, . \,$ & $\scriptstyle \mathsf{neg_3}(x, a) \triangleq \lnot x \mapsto a \lor \exists y \, . \,$ & {\scriptsize $\funcsolve$} & {\scriptsize 0.04s} & {\scriptsize 0.13s} & {\scriptsize 0.28s} & {\scriptsize 0.48s} & {\scriptsize 4.20s}\\[-1pt]
\cline{3-8}
$\scriptstyle x \mapsto a * \mathsf{pos_2}(a, y)$ & $\scriptstyle x \mapsto a * \mathsf{neg_3}(a, y)$ & {\scriptsize manual} & {\scriptsize 0.01s} & {\scriptsize 0.03s} & {\scriptsize 0.05s} & {\scriptsize 0.09s} & {\scriptsize 0.45s}\\[-1pt]
\Xhline{2\arrayrulewidth}
$\scriptstyle \mathsf{pos_2}(x, a) \triangleq x \mapsto a \lor \exists y \, . \,$ & $\scriptstyle \mathsf{neg_4}(x, a) \triangleq x \mapsto a \lor \exists y \, . \,$ & {\scriptsize $\funcsolve$} & --- & {\scriptsize 0.08s} & {\scriptsize 0.15s} & {\scriptsize 0.26s} & {\scriptsize 1.33s}\\[-1pt]
\cline{3-8}
$\scriptstyle x \mapsto a * \mathsf{pos_2}(a, y)$ & $\scriptstyle \lnot x \mapsto a * \mathsf{neg_4}(a, y)$ & {\scriptsize manual} & --- & {\scriptsize 0.03s} & {\scriptsize 0.06s} & {\scriptsize 0.09s} & {\scriptsize 0.46s}\\[-1pt]
\Xhline{2\arrayrulewidth}
\multicolumn{8}{|c|}{\bf Verification conditions}\\
\Xhline{2\arrayrulewidth}
$\scriptstyle \mathsf{list}^{n}(w)$ & $\scriptstyle\mathsf{wp}(\mathbf{disp}, \mathsf{list}^{n-1}(w))$ & {\scriptsize $\funcsolve$} & {\scriptsize 0.01s} & {\scriptsize 0.03s} & {\scriptsize 0.08s} & {\scriptsize 0.19s} & {\scriptsize 1.47s}\\[-1pt]
\cline{3-8}
& & {\scriptsize manual} & {\scriptsize $<0.01$s} & {\scriptsize 0.01s} & {\scriptsize 0.02s} & {\scriptsize 0.05s} & {\scriptsize 0.26s}\\[-1pt]
\Xhline{2\arrayrulewidth}
$\scriptstyle\mathsf{list}^{n}(w)$ & $\scriptstyle\mathsf{wp}^{n}(\mathbf{disp},\emp \land w = \nil)$ & {\scriptsize $\funcsolve$} & {\scriptsize 0.01s} & {\scriptsize 0.06s} & {\scriptsize 0.17s} & {\scriptsize 0.53s} & {\scriptsize 7.08s}\\[-1pt]
\cline{3-8}
& & {\scriptsize manual} & {\scriptsize $<0.01$s} & {\scriptsize 0.02s} & {\scriptsize 0.08s} & {\scriptsize 0.14s} & {\scriptsize 2.26s}\\[-1pt]
\Xhline{2\arrayrulewidth}
$\scriptstyle\mathsf{zlist}^{n}(w)$  & $\scriptstyle\mathsf{wp}(\textbf{zdisp}, \mathsf{zlist}^{n-1}(w))$ & {\scriptsize $\funcsolve$} & {\scriptsize 0.04s} & {\scriptsize 0.05s} & {\scriptsize 0.09s} & {\scriptsize 0.19s} & {\scriptsize 1.25s}\\[-1pt]
\cline{3-8}
& & {\scriptsize manual} & {\scriptsize $<0.01$s} & {\scriptsize 0.01s} & {\scriptsize 0.02s} & {\scriptsize 0.04s} & {\scriptsize 0.29s}\\[-1pt]
\Xhline{2\arrayrulewidth}
$\scriptstyle\mathsf{zlist}^{n}(w)$ & $\scriptstyle\mathsf{wp}^n(\textbf{zdisp}, \emp \land w = \nil)$ & {\scriptsize $\funcsolve$} & {\scriptsize 0.01s} & {\scriptsize 0.10s} & {\scriptsize 0.32s} & {\scriptsize 0.87s} & {\scriptsize 11.88s}\\[-1pt]
\cline{3-8}
& & {\scriptsize manual} & {\scriptsize 0.01s} & {\scriptsize 0.02s} & {\scriptsize 0.07s} & {\scriptsize 0.15s} & {\scriptsize 2.20s}\\[-1pt]
\Xhline{2\arrayrulewidth}
$\scriptstyle\mathsf{list}^{n}(u) * \mathsf{list}^{0}(v)$ & $\scriptstyle\mathsf{wp}(\mathbf{rev}, \mathsf{list}^{n-1}(u) * \mathsf{list}^{1}(v))$ & {\scriptsize $\funcsolve$} & {\scriptsize 0.38s} & {\scriptsize 0.06s} & {\scriptsize 0.11s} & {\scriptsize 0.16s} & {\scriptsize 0.56s}\\[-1pt]
\cline{3-8}
& & {\scriptsize manual} & {\scriptsize 0.07s} & {\scriptsize 0.03s} & {\scriptsize 0.07s} & {\scriptsize 0.11s} & {\scriptsize 0.43s}\\[-1pt]
\Xhline{2\arrayrulewidth}
$\scriptstyle\mathsf{list}^{n}(u) * \mathsf{list}^{0}(v)$ & $\scriptstyle\mathsf{wp}^n(\mathbf{rev}, u = \nil \land \mathsf{list}^{n}(v))$ & {\scriptsize $\funcsolve$} & {\scriptsize 0.38s} & {\scriptsize 0.07s} & {\scriptsize 0.30s} & {\scriptsize 68.68s} & {\scriptsize $>300$s}\\[-1pt]
\cline{3-8}
& & {\scriptsize manual} & {\scriptsize 0.08s} & {\scriptsize 0.06s} & {\scriptsize 0.11s} & {\scriptsize 0.23s} & {\scriptsize 1.79s}\\[-1pt]
\Xhline{2\arrayrulewidth}
$\scriptstyle\mathsf{zlist}^{n}(u) * \mathsf{zlist}^{0}(v)$ & $\scriptstyle\mathsf{wp}(\mathbf{zrev}, \mathsf{zlist}^{n-1}(u) * \mathsf{zlist}^{1}(v))$ & {\scriptsize $\funcsolve$} & {\scriptsize 0.22s} & {\scriptsize 0.07s} & {\scriptsize 0.15s} & {\scriptsize 0.21s} & {\scriptsize 0.75s}\\[-1pt]
\cline{3-8}
& & {\scriptsize manual} & {\scriptsize 0.04s} & {\scriptsize 0.02s} & {\scriptsize 0.04s} & {\scriptsize 0.06s} & {\scriptsize 0.31s}\\[-1pt]
\Xhline{2\arrayrulewidth}
$\scriptstyle\mathsf{zlist}^{n}(u) * \mathsf{zlist}^{0}(v)$ & $\scriptstyle\mathsf{wp}^n(\mathbf{zrev}, u = \nil \land \mathsf{zlist}^{n}(v))$ & {\scriptsize $\funcsolve$} & {\scriptsize 0.23s} & {\scriptsize 0.09s} & {\scriptsize 0.17s} & {\scriptsize 0.30s} & {\scriptsize 2.06s}\\[-1pt]
\cline{3-8}
& & {\scriptsize manual} & {\scriptsize 0.04s} & {\scriptsize 0.02s} & {\scriptsize 0.05s} & {\scriptsize 0.09s} & {\scriptsize 0.48s}\\[-1pt]
\Xhline{2\arrayrulewidth}
\end{tabular}\\
\caption{Experimental results}\label{tab:experiments}
\vspace*{-1cm}
\end{table}

\section{Conclusions and Future Work}

We present theoretical and practical results for the existence of
effective decision procedures for the fragment of Separation Logic
obtained by restriction of formulae to quantifier prefixes in the set
$\exists^*\forall^*$. The theoretical results range from
undecidability, when the set of memory locations is taken to be the
set of integers and linear arithmetic constraints are allowed, to
\textsc{PSPACE}-completeness, when locations and data in the cells
belong to an uninterpreted sort, equipped with equality only. We have
implemented a decision procedure for the latter case in the CVC4 SMT
solver, using an effective counterexample-driven instantiation of the
universal quantifiers. The procedure is shown to be sound, complete
and termination is guaranteed when the input belongs to a decidable
fragment of $\seplog$.

As future work, we aim at refining the decidability chart for
$\exists^*\forall^*\seplog(T)_{\locs,\data}$, by considering the case
where the locations are interpreted as integers, with weaker
arithmetics, such as sets of difference bounds, or octagonal
constraints. These results are likely to extend the application range
of our tool, to e.g.\ solvers working on $\seplog$ with inductive
definitions and data constraints. The current implementation should
also benefit from improvements of the underlying quantifier-free
$\seplog$ and set theory solvers.

\bibliographystyle{splncs03} 
\bibliography{refs}

\begin{thebibliography}{10}
\providecommand{\url}[1]{\texttt{#1}}
\providecommand{\urlprefix}{URL }

\bibitem{SplInter}
Albargouthi, A., Berdine, J., Cook, B., Kincaid, Z.: Spatial Interpolants, pp.
  634--660. Springer (2015)

\bibitem{CVC4-CAV-11}
Barrett, C., Conway, C., Deters, M., Hadarean, L., Jovanovic, D., King, T.,
  Reynolds, A., Tinelli, C.: {CVC4}. In: Computer Aided Verification (CAV).
  Springer (2011)

\bibitem{DBLP:journals/ijait/BaumgartnerFT06}
Baumgartner, P., Fuchs, A., Tinelli, C.: Implementing the model evolution
  calculus. International Journal on Artificial Intelligence Tools  15(1),
  21--52 (2006)

\bibitem{BrocheninDemriLozes11}
Brochenin, R., Demri, S., Lozes, E.: On the almighty wand. Information and
  Computation  211,  106 -- 137 (2012)

\bibitem{BrotherstonSimpson11}
Brotherston, J., Simpson, A.: Sequent calculi for induction and infinite
  descent. Journal of Logic and Computation  21(6),  1177--1216 (December 2011)

\bibitem{Infer}
Calcagno, C., Distefano, D.: Infer: An automatic program verifier for memory
  safety of c programs. In: Proc. of NASA Formal Methods'11. LNCS, vol. 6617.
  Springer (2011)

\bibitem{CalcagnoYangOHearn01}
Calcagno, C., Yang, H., O’hearn, P.W.: Computability and complexity results
  for a spatial assertion language for data structures. In: FST TCS 2001,
  Proceedings, pp. 108--119. Springer (2001)

\bibitem{DemriDeters15}
Demri, S., Deters, M.: Two-variable separation logic and its inner circle. ACM
  Transactions on Computational Logic  16(2:15) (2015)

\bibitem{DemriGalmicheWendlingMery14}
Demri, S., Galmiche, D., Larchey-Wendling, D., M{\'e}ry, D.: Separation logic
  with one quantified variable. In: {P}roceedings of the 9th {I}nternational
  {C}omputer {S}cience {S}ymposium in {R}ussia ({CSR}'14). Lecture Notes in
  Computer Science, vol. 8476, pp. 125--138. Springer (2014)

\bibitem{Predator}
Dudka, K., Peringer, P., Vojnar, T.: Predator: A practical tool for checking
  manipulation of dynamic data structures using separation logic. In: Proc. of
  CAV'11. LNCS, vol. 6806. Springer (2011)

\bibitem{MeryGalmiche07}
Galmiche, D., M{\'e}ry, D.: Tableaux and resource graphs for separation logic.
  Journal of Logic and Computation  20(1),  189--231 (2010)

\bibitem{GanzingerHagenNieuwenhuisOliverasTinelli04}
Ganzinger, H., Hagen, G., Nieuwenhuis, R., Oliveras, A., Tinelli, C.: Dpll (t):
  Fast decision procedures. In: CAV 2004, Proceedings. pp. 175--188 (2004)

\bibitem{GeDeM-CAV-09}
Ge, Y., de~Moura, L.: Complete instantiation for quantified formulas in
  satisfiability modulo theories. In: Proceedings of CAV'09. LNCS, vol. 5643
  (2009)

\bibitem{Halpern91}
Halpern, J.Y.: Presburger arithmetic with unary predicates is $\pi^1_1$
  complete. The Journal of Symbolic Logic  56(2),  637--642 (1991)

\bibitem{IshtiaqOHearn01}
Ishtiaq, S.S., O'Hearn, P.W.: Bi as an assertion language for mutable data
  structures. In: ACM SIGPLAN Notices. vol.~36, pp. 14--26 (2001)

\bibitem{DBLP:conf/cade/Korovin08}
Korovin, K.: iprover - an instantiation-based theorem prover for first-order
  logic (system description). In: Automated Reasoning, 4th International Joint
  Conference, {IJCAR} 2008, Sydney, Australia, August 12-15, 2008, Proceedings.
  pp. 292--298 (2008)

\bibitem{Lewis80}
Lewis, H.R.: Complexity results for classes of quantificational formulas.
  Journal of Computer and System Sciences  21(3),  317 -- 353 (1980)

\bibitem{Matyiasevich70}
Matiyasevich, Y.: Enumerable sets are diophantine. Journal of Sovietic
  Mathematics  11,  354 -- 358 (1970)

\bibitem{Sleek}
Nguyen, H.H., Chin, W.N.: Enhancing program verification with lemmas. In: Proc
  of CAV'08. LNCS, vol. 5123. Springer (2008)

\bibitem{DBLP:journals/jar/PiskacMB10}
Piskac, R., de~Moura, L.M., Bj{\o}rner, N.: Deciding effectively propositional
  logic using {DPLL} and substitution sets. J. Autom. Reasoning  44(4),
  401--424 (2010)

\bibitem{Piskac2013}
Piskac, R., Wies, T., Zufferey, D.: Automating separation logic using smt. In:
  CAV 2013, Proceedings. pp. 773--789 (2013)

\bibitem{Piskac2014}
Piskac, R., Wies, T., Zufferey, D.: Automating separation logic with trees and
  data. In: CAV 2014, Proceedings. pp. 711--728 (2014)

\bibitem{ReynoldsDKBT15Cav}
Reynolds, A., Deters, M., Kuncak, V., Tinelli, C., Barrett, C.W.:
  Counterexample-guided quantifier instantiation for synthesis in {SMT}. In:
  Computer Aided Verification - 27th International Conference, {CAV} 2015, San
  Francisco, CA, USA, July 18-24, 2015, Proceedings, Part {II}. pp. 198--216
  (2015)

\bibitem{ReynoldsIosifKingSerban16}
Reynolds, A., Iosif, R., Serban, C., King, T.: A decision procedure for
  separation logic in {SMT}. In: Automated Technology for Verification and
  Analysis - 14th International Symposium, {ATVA} 2016, Chiba, Japan, October
  17-20, 2016, Proceedings. pp. 244--261 (2016)

\bibitem{ReyEtAl-1-RR-13}
Reynolds, A., Tinelli, C., Goel, A., Krsti\'c, S.: Finite model finding in
  {SMT}. In: Sharygina, N., Veith, H. (eds.) Computer Aided Verification,
  Lecture Notes in Computer Science, vol. 8044, pp. 640--655. Springer Berlin
  Heidelberg (2013)

\bibitem{sl-comp14}
Sighireanu, M., Cok, D.: Report on sl-comp 2014. Journal on Satisfiability,
  Boolean Modeling and Computation  1 (2014)

\bibitem{Xisa}
Toubhans, A., Chang, B.Y.E., Rival, X.: An Abstract Domain Combinator for
  Separately Conjoining Memory Abstractions, pp. 285--301. Springer (2014)

\bibitem{WeidenbachVoigt15}
Voigt, M., Weidenbach, C.: Bernays-sch{\"{o}}nfinkel-ramsey with simple bounds
  is nexptime-complete. CoRR  abs/1501.07209 (2015)

\bibitem{YangPhd}
Yang, H.: Local Reasoning for Stateful Programs. {Ph.D.}\ thesis, University of
  Illinois at Urbana-Champaign (2001)

\end{thebibliography}

\end{document}